\newcommand{\aop}{Ann. Phys.~}
\newcommand{\jmp}{J. Math. Phys.~}
\newcommand{\jpa}{J. Phys. A: Math. Theor.~}
\newcommand{\njp}{New. J. Phys.~}
\newcommand{\tinyspace}{\mspace{1mu}}
\newcommand{\abs}[1]{\left\lvert\tinyspace #1 \tinyspace\right\rvert}
\newcommand{\norm}[1]{\left\lVert\tinyspace #1 \tinyspace\right\rVert}
\renewcommand{\det}{\operatorname{det}}
\renewcommand{\t}{{\scriptscriptstyle\mathsf{T}}}
\newcommand{\setft}[1]{\mathrm{#1}}
\newcommand{\density}[1]{\setft{D}\left(#1\right)}
\def\dif{\mathrm{d}}
\def\complex{\mathbb{C}}
\def\real{\mathbb{R}}
\def\I{\mathbbm{1}}
\def\dif{\mathrm{d}}
\def\1{\mathbf{1}}
\newcommand{\out}[2]{| #1\rangle\langle #2 |}
\newcommand{\Inner}[2]{\left\langle #1 , #2\right\rangle}
\newcommand{\Innerm}[3]{\left\langle #1 \left| #2 \right| #3 \right\rangle}
\newcommand{\Pa}[1]{\left(#1\right)}
\newcommand{\Br}[1]{\left[#1\right]}
\newcommand{\set}[1]{\{#1\}}
\newcommand{\Set}[1]{\left\{#1\right\}}
\newcommand{\ket}[1]{|#1\rangle}
\DeclareMathOperator{\trace}{Tr}
\newcommand{\Ptr}[2]{\trace_{#1}\Pa{#2}}
\newcommand{\Tr}[1]{\Ptr{}{#1}}
\newcommand{\Abs}[1]{\left|\tinyspace#1\tinyspace\right|}
\def\cB{\mathcal{B}}\def\cE{\mathcal{E}}
\def\cF{\mathcal{F}}\def\cG{\mathcal{G}}
\def\cR{\mathcal{R}}
\def\bsF{\boldsymbol{F}}
\def\bsP{\boldsymbol{P}}\def\bsR{\boldsymbol{R}}
\def\bsa{\boldsymbol{a}}\def\bsb{\boldsymbol{b}}\def\bsc{\boldsymbol{c}}\def\bse{\boldsymbol{e}}
\def\bsp{\boldsymbol{p}}\def\bsr{\boldsymbol{r}}
\def\bsu{\boldsymbol{u}}\def\bsv{\boldsymbol{v}}
\newtheorem{thrm}{Theorem}
\newtheorem{lem}{Lemma}
\newtheorem{prop}{Proposition}
\newtheorem{cor}{Corollary}
\theoremstyle{definition}
\newtheorem{definition}{Definition}
\begin{document}


\title{Geometry of sets of Bargmann invariants}


\author{Lin Zhang}
\email{godyalin@163.com}
\author{Bing Xie}
\email{xiebingjiangxi2023@163.com} \affiliation{School of Science,
Hangzhou Dianzi University, Hangzhou 310018, China}
\author{Bo Li}
\email{libobeijing2008@163.com} \affiliation{School of Computer and
Computing Science, Hangzhou City University, Hangzhou 310015, China}

\date{\today}

\begin{abstract}
Certain unitary-invariants, known as Bargmann invariants or
multivariate traces of quantum states, have recently gained
attention due to their applications in quantum information theory.
However, determining the boundaries of sets of Bargmann invariants
remains a theoretical challenge. In this study, we address the
problem by developing a unified, dimension-independent formulation
that characterizes the sets of the 3rd and 4th Bargmann invariants.
In particular, our result for the set of 4th Bargmann invariants
confirms the conjecture given by Fernandes \emph{et al.} [\prl
\href{https://doi.org/10.1103/PhysRevLett.133.190201}{\textbf{133},
190201 (2024)}]. Based on the obtained results, we conjecture that
the unified, dimension-independent formulation of the boundaries for
sets of 3rd-order and 4th-order Bargmann invariants may extend to
the general case of the $n$th-order Bargmann invariants. These
results deepen our understanding of the fundamental physical limits
within quantum mechanics and pave the way for novel applications of
Bargmann invariants in quantum information processing and related
fields.
\end{abstract}

\maketitle

\section{Introduction}

Quantum mechanics provides a fundamental framework for describing
the behavior of physical systems, traditionally applied to
microscopic scales. A common characteristic of quantum mechanics is
its reliance on the formalism of complex Hilbert spaces, which
constitutes the mathematical foundation of the theory. However, this
feature is neither unique nor exclusive to quantum theory, as other
physical theories, including classical electromagnetism, also employ
complex numbers in their formulations. Furthermore, the use of
complex numbers is not an indispensable aspect of quantum mechanics,
as there exist real-valued formulations capable of reproducing all
quantum theoretical predictions. This distinction is particularly
relevant in light of recent advancements, such as the work by Renou
\emph{et al.} \cite{Renou2021}, which established a no-go theorem
falsifying a broad class of real quantum theories. Nevertheless,
their results do not universally exclude all real formulations, as
certain theories evade the constraints of their no-go theorem. These
insights underscore the nuanced relationship between mathematical
representations and physical predictions in quantum theory,
highlighting the need for careful consideration of foundational
assumptions.

The complex mathematical structure is associated with various
physical phenomena, including geometric phases
\cite{Berry1984,Aharonov1987,Mukunda1993,Chruscinski2004}, quantum
coherence \cite{Baumgratz2014}, and interference effects
\cite{Pati1995}, among others. Recently, several studies
\cite{Renou2021,Hickey1,Wu1,Wu17} have sparked interest in
investigating a particular type of quantum coherence known as
imaginarity. In studying such phenomena, Bargmann invariants
(proposed originally in \cite{Bargmann1964}, and developed in
\cite{Chien2016}), as a mathematical tool, provides a framework to
analyze the properties of quantum states that remain invariant under
unitary transformations. Bargmann invariants play a important role
in distinguishing quantum states and characterizing their
symmetries. They are gauge-independent quantities directly related
to the geometric phase acquired by quantum systems undergoing cyclic
evolution \cite{Simon1993}. Their connections to Wigner rotations
and null-phase curves further underscore their significance
\cite{Mukunda2003}. Additionally, Bargmann invariants are firstly
linked to Kirkwood-Dirac quasi-probability distributions
\cite{Dirac1945, Kirkwood1933, Bamber2014,DavidSchmid2024} by Wagner
\emph{et. al.} in \cite{Wagner2024}, providing insights into quantum
contextuality \cite{Budroni2022}, which is connected to Bargmann
invariants in Refs. \cite{Wagner2024pra,Giordani2023}, and
measurement-induced disturbances
\cite{Bievre2021,Bievre2023,Budiyono2023}. Their invariance
properties enable a deeper understanding of quantum state structures
and associated symmetries \cite{Chien2016,Galvao2020,Oszmaniec2024}.
Bargmann invariants are also known as multivariate traces, enabling
efficient estimation using constant-depth quantum circuits
\cite{Quek2024}, compatible with near-term quantum computing
architectures.

Although the role of Bargmann invariants within the formal framework
of quantum resource theories has not yet been fully explored,
several studies have already established connections between the
values of these invariants and various quantum resources. Recently,
the work of Fernandes \emph{et al.} \cite{Fernandes2024}, among
others, has initiated investigations into how these invariants can
be utilized to witness quantum imaginarity \cite{Wu1, Wu17}---a
property that has been independently recognized as a valuable
quantum resource within the framework of resource theories. This
perspective is gaining recognition as a relevant and promising
direction for future research.

A central challenge in this context is determining the boundaries
\cite{Fraser2023} of sets of Bargmann invariants of arbitrary
orders. These boundaries physically define meaningful constraints
imposed by quantum mechanics, and reveal intricate relationships
between Bargmann invariants and quantum imaginarity
\cite{Fernandes2024}. Determining such boundaries is a problem of
relevance to both quantum foundations and applications. Notably,
Bargmann invariants have been linked to quantum resources, such as
imaginarity \cite{Fernandes2024}, and indistinguishability
\cite{Shchesnovich2015,Menssen2017,Brod2019,Jones2020,Jones2023,Rodari2024a,Rodari2024b},
and to their use in benchmarking and certifying nonclassical
resources of quantum devices
\cite{Giordani2020,Giordani2021,Giordani2023,Hinsche2024}.
Researchers have proposed basis-independent witnesses for
imaginarity, constructed from unitary-invariant properties of
quantum states \cite{Fernandes2024}. While these studies fully
characterized invariant values for three mixed states, only partial
results exist for when considering four states or higher, leaving
several unresolved conjectures that motivate further investigations
of the boundaries of the sets of all Bargmann invariants.

The paper is organized as follows. In Section~\ref{sect:pre}, we
review the fundamental concepts of Bargmann invariants and establish
key definitions related to numerical ranges and envelope of a family
of curves. This includes foundational results that underpin our
subsequent analysis. Section~\ref{sect:mainre} presents our core
contributions: (1) a unified, dimension-independent characterization
of 3rd- and 4th-order Bargmann invariants, and (2) a rigorous proof
of Fernandes \emph{et al.}'s conjecture for the 4th-order case,
resolving an open question in the field. Section~\ref{sect:evidence}
provides qualitative arguments supporting the conjectured
generalization of our framework to $n$th-order Bargmann invariants.
We conclude with a summary of our results and outline future
research directions, including the use of Bargmann invariants for
entanglement detection---a direction we explore in depth in
companion work \cite{Lin2024}.

\section{Preliminaries}
\label{sect:pre}

For a tuple of quantum states $\Psi=(\rho_1,\ldots,\rho_n)$, where
each state $\rho_k$ acts on the underlying Hilbert space
$\complex^d$, the \emph{Bargmann invariant} associated with $\Psi$
is defined as $\Delta_n(\Psi)=\Tr{\rho_1\cdots\rho_n}$. Due to the
spectral decomposition of density operators,
$\rho_k=\sum^d_{j=1}\lambda_{k,j}\out{\psi_{k,j}}{\psi_{k,j}}$,
where $k\in\set{1,2,\ldots,n}$ and $\lambda_{k,j}$'s are eigenvalues
of $\rho_k$, then we see that
\begin{eqnarray*}
&&\Delta_n(\Psi) = \Tr{\rho_1\cdots\rho_n} \\
&&=\sum^d_{j_1,\ldots,j_n=1}\lambda_{j_1,\ldots,j_n}\Tr{\psi_{1,j_1}\cdots\psi_{n,j_n}},
\end{eqnarray*}
where $\psi_{k,j_k}\equiv\out{\psi_{k,j_k}}{\psi_{k,j_k}}$ and
$\lambda_{j_1,\ldots,j_n}=\prod^n_{k=1}\lambda_{k,j_k}$ is clearly a
multivariate probability distribution on $\set{1,\ldots,d}^n$. In
view of this relationship between the sets of Bargmann invariants
defined by pure states and mixed states, we introduce different
notations for them.
\begin{definition}[Bargmann invariants for quantum states]
We denote the set of possible values of $n$th-order Bargmann
invariants for a tuple $\Psi=(\rho_1,\rho_2,\ldots,\rho_n)$ of
quantum states $\rho_k$, where $k=1,\ldots,n$, in $\complex^d(d
\geqslant 2)$, by $\cB_n(d):=\Set{\Delta_n(\Psi): \rho_k \text{ acts
on }\complex^d,k=1,\ldots,n}$. In particular, we denote by
\begin{enumerate}
\item[(i)] $\cB^\circ_n(d)$: the set of $n$th-order Bargmann
invariants defined by only pure states.
\item[(ii)] $\cB^\bullet_n(d)$: the set of $n$th-order Bargmann
invariants defined by only mixed states.
\end{enumerate}
\end{definition}
Note that $\cB^\bullet_n(d)=\mathrm{ConvexHull}(\cB^\circ_n(d))$ by
definition. Thus it would be the case
$\cB^\bullet_n(d)=\cB^\circ_n(d)$ whenever $\cB^\circ_n(d)$ is
convex (it is indeed the case when $n\in\set{3,4}$ from
Theorem~\ref{th:one}). We firstly need figure out the properties of
$\cB^\circ_n(d)$. Following \cite{Fernandes2024}, the set of
quantum-realizable values for $n$th-order Bargmann invariants is
given by $\cB^\circ_n := \Set{\Delta \in \complex : \exists \Psi,
\Delta = \Delta_n(\Psi)}$, where $\Delta_n(\Psi):=
\Inner{\psi_1}{\psi_2} \cdots \Inner{\psi_{n-1}}{\psi_n}
\Inner{\psi_n}{\psi_1}$, defined for a tuple of pure states
$\Psi=(\out{\psi_1}{\psi_1},\ldots,\out{\psi_n}{\psi_n})$ in
$\complex^d$ for some $d \geqslant 2$. With these notations, we
obtain $\cB^\circ_n = \cup_{d \geqslant 2} \cB^\circ_n(d)$. The
properties of $\cB^\circ_n$ are as follows:
\begin{itemize}
    \item $\cB^\circ_n$ is a \emph{closed} and \emph{connected} set because it is the continuous image of Cartesian product of pure state spaces.
    \item $\cB^\circ_n$ is symmetric with respect to the real axis, as
    $\Tr{\psi_n \cdots \psi_2\psi_1} = \overline{\Tr{\psi_1\psi_2 \cdots \psi_n}}$.
    \item $\cB^\circ_n \subset \Set{z \in \complex : \abs{z} \leqslant 1}$ due to the fact that $\abs{\Tr{\rho_1\rho_2\cdots\rho_n}}\leqslant1$.
\end{itemize}
We see from the convexity of $\cB^\circ_3$ \cite{Fernandes2024} that
$\cB^\circ_3 = \mathrm{ConvexHull}(\partial \cB^\circ_3)$, where
$\partial \cB^\circ_3$ denotes the boundary of $\cB^\circ_3$.
Similarly, $\partial \cB^\circ_n$ indicates the boundary of
$\cB^\circ_n$ for a general natural number $n$. The notation
$\partial\cB^\circ_n(d)$ has a similar meaning. This suggests that
determining $\partial \cB^\circ_n$ is crucial for characterizing
$\cB^\circ_n$ and then establishing its convexity. In the following,
we present polar equations of boundaries $\partial\cB^\circ_n(d)$
and show that $\cB^\circ_n(d)=\cB^\circ_n(2)=\cB^\circ_n$ is a
convex set in $\complex$, where $n\in\set{3,4}$.

In the proof of the main result (Theorem~\ref{th:one}), we will make
use of the following notions: the numerical range of a complex
square matrix and envelope of a family of plane curves. Let me
explain it now.

\subsection{Numerical range}

\begin{definition}[Numerical range of a complex square matrix]
The numerical range (NR) of a $d\times d$ complex matrix $\bsF$ is
defied by
$$
W_d(\bsF):=\Set{\Innerm{\psi}{\bsF}{\psi}:\ket{\psi}\in\complex^d,\Inner{\psi}{\psi}=1}.
$$
\end{definition}
The notion of NR was introduced by Toeplitz \cite{Toeplitz1918} in
1918. In 1919, Hausdorff showed that $W_d(\bsF)\subset\complex$ is a
compact and convex set \cite{Hausdorff1919}. Based on the main
result in \cite{Chien2001}, we immediately obtain the following key
result concerning the NR of a rank-one complex square matrix.
\begin{prop}\label{prop:chien}
If a $d\times d$ complex matrix $\bsF$ is given by
$\bsF=\bsu\bsv^\dagger$, where $\bsu,\bsv\in\complex^d$. Then
$W_d(\bsF)$ is an elliptical disk in $\complex$ with foci $0$ and
$\Inner{\bsv}{\bsu}$, and minor axis
$(\Inner{\bsu}{\bsu}\Inner{\bsv}{\bsv}-\abs{\Inner{\bsu}{\bsv}}^2)^\frac12$.
\end{prop}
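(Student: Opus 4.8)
The plan is to reduce the computation of $W_d(\bsF)$ to a genuinely two-dimensional problem supported on the subspace $\cV:=\spn\{\bsu,\bsv\}$, and then to invoke the classical \emph{elliptical range theorem} for $2\times 2$ matrices: the sharpened Hausdorff--Toeplitz statement that the numerical range of a $2\times 2$ matrix $A$ with eigenvalues $\lambda_1,\lambda_2$ is an elliptical disk with foci $\lambda_1,\lambda_2$ and minor axis $\sqrt{\Tr{A^\dagger A}-\abs{\lambda_1}^2-\abs{\lambda_2}^2}$. This is presumably the content we extract from \cite{Chien2001}; I would either cite it directly or reprove it in an appendix.

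First I would note that for any unit vector $\ket\psi$, writing $\ket\psi=\ket\phi+\ket{\phi_\perp}$ with $\ket\phi$ the orthogonal projection onto $\cV$ and $\ket{\phi_\perp}\perp\cV$, one has $\bsF\ket\psi=\bsu\Inner{\bsv}{\phi}$, so
\[
\Innerm{\psi}{\bsF}{\psi}=\Inner{\phi}{\bsu}\Inner{\bsv}{\phi},
\]
which depends only on $\ket\phi$. Since $\Norm{\phi}^2+\Norm{\phi_\perp}^2=1$, the attainable $\ket\phi$ fill exactly the unit sphere of $\cV$ when $d=2$, and the whole closed unit ball of $\cV$ when $d\geqslant 3$ (there is then room to absorb the deficit $1-\Norm{\phi}^2$ into $\ket{\phi_\perp}$). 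Next I would identify the ellipse: $\bsF$ maps $\cV$ into $\spn\{\bsu\}\subseteq\cV$, so $\bsF|_{\cV}$ is an operator on $\cV$; being rank one with $\Tr{\bsF}=\Inner{\bsv}{\bsu}$, its spectrum on the two-dimensional space is $\{0,\Inner{\bsv}{\bsu}\}$, yielding the two foci. A short computation gives $\bsF^\dagger\bsF=\Inner{\bsu}{\bsu}\,\bsv\bsv^\dagger$, hence $\Tr{\bsF^\dagger\bsF}=\Inner{\bsu}{\bsu}\Inner{\bsv}{\bsv}$, and the elliptical range theorem then produces the minor axis $\sqrt{\Inner{\bsu}{\bsu}\Inner{\bsv}{\bsv}-\abs{\Inner{\bsu}{\bsv}}^2}$, matching the claim. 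When $\bsu,\bsv$ are linearly dependent, Cauchy--Schwarz holds with equality, the minor axis vanishes, and the disk degenerates to the segment joining $0$ and $\Inner{\bsv}{\bsu}$, which is the zero-minor-axis case of the same formula.

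The step I expect to require the most care is reconciling the two descriptions for different $d$, since this is exactly what makes the proposition \emph{dimension-independent}. For $d=2$ the range is generated by unit $\ket\phi$ and equals the elliptical disk $E$, whereas for $d\geqslant 3$ the generating $\ket\phi$ sweep the entire unit ball, which at first glance could enlarge the set. The resolution is that $g(\phi):=\Inner{\phi}{\bsu}\Inner{\bsv}{\phi}$ is homogeneous of degree two with $g(r\phi)=r^2 g(\phi)$ for real $r$, so the image of the ball is $\Set{s\,w:s\in[0,1],\,w\in E}$. Because $0$ is a focus of $E$ it lies inside the (convex) disk, and convexity forces the whole segment $[0,w]$ into $E$ for every $w\in E$; hence the ball image collapses back onto $E$, and $W_d(\bsF)=E$ for all $d\geqslant 2$.

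In summary, the conceptual core is the projection argument collapsing the problem to $\cV$, the quantitative core is the eigenvalue and Frobenius-norm computation fed into the elliptical range theorem, and the one genuinely delicate point is the convexity-plus-focus observation that erases the apparent $d$-dependence. The latter is the property I would want to foreground, since the dimension-independence it furnishes is precisely what the subsequent theorems on $\cB^\circ_n(d)$ rely upon.
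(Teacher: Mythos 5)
Your proof is correct, but it is genuinely more self-contained than what the paper does: the paper offers no argument at all for this proposition, stating that it follows ``immediately'' from the main result of Chien (2001) on the $c$-numerical range of a rank-one matrix, and then only verifies the consequence $2a=\Norm{\bsu}\Norm{\bsv}$ for the major axis afterwards. You instead reduce everything to the classical $2\times2$ elliptical range theorem: the compression of $\bsF=\bsu\bsv^\dagger$ to $\cV=\spn\set{\bsu,\bsv}$ is honestly $\bsF$ itself (since $\bsF$ kills $\cV^\perp$ and has range in $\cV$), its eigenvalues on $\cV$ are $0$ and $\Inner{\bsv}{\bsu}$, and $\Tr{\bsF^\dagger\bsF}=\Inner{\bsu}{\bsu}\Inner{\bsv}{\bsv}$ feeds the minor-axis formula $\sqrt{\Tr{\bsF^\dagger\bsF}-\abs{\lambda_1}^2-\abs{\lambda_2}^2}$ to give exactly the claimed quantity. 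Your one delicate step --- that for $d\geqslant3$ the admissible projections $\ket\phi$ sweep the closed unit ball of $\cV$ rather than its sphere, resolved via the degree-two homogeneity $g(r\phi)=r^2g(\phi)$ together with convexity of the disk and the fact that the focus $0$ lies in it --- is sound, and it handles the degenerate cases ($\bsu,\bsv$ dependent, or $\Inner{\bsv}{\bsu}=0$, where the foci coincide and one gets a circle) without extra work. What your route buys is an elementary, dimension-transparent proof that makes explicit why $W_d(\bsF)$ does not depend on $d$, which is precisely the feature the paper's later results on $\cB^\circ_n(d)$ exploit; what the paper's citation buys is brevity and the extra generality of Chien's result (which covers $c$-numerical ranges, of which this is the special case $c=(1,0,\ldots,0)$). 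If you fold your argument into the paper, the elliptical range theorem itself should still be cited or reproved, as you note.
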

Let $2c=\abs{\Inner{\bsv}{\bsu}}$ and
$2b=(\Inner{\bsu}{\bsu}\Inner{\bsv}{\bsv}-\abs{\Inner{\bsu}{\bsv}}^2)^\frac12$.
Then the major axis is given by
$2a=\sqrt{(2b)^2+(2c)^2}=(\Inner{\bsu}{\bsu}\Inner{\bsv}{\bsv})^\frac12$.
Note that on the complex plane, the sum of distances from a moving
point to foci $0$ and $\Inner{\bsv}{\bsu}$ is constrained by
$\abs{z}+\abs{z-\Inner{\bsv}{\bsu}}\leqslant
2a=\norm{\bsu}\norm{\bsv}$, where
$\norm{\cdot}=\sqrt{\Inner{\cdot}{\cdot}}$. Therefore $z\in
W_d(\bsu\bsv^\dagger)$ if and only if
$\abs{z}+\abs{z-\Inner{\bsv}{\bsu}}\leqslant\norm{\bsu}\norm{\bsv}$
for $z\in\complex$. In particular, $W_d(\bsu\bsv^\dagger)$ is a
circle $\abs{z}\leqslant\frac12\norm{\bsu}\norm{\bsv}$ for
$\Inner{\bsu}{\bsv}=0$.

Now for an $n$-tuple $\Psi=(\psi_1,\ldots,\psi_n)$, where
$n\geqslant3$, of pure qudit states, when
$\psi_n\equiv\out{\psi_n}{\psi_n}$ varies and
$\psi_1,\ldots,\psi_{n-1}$ are fixed temporarily, then the
$n$th-order Bargmann invariant of this tuple $\Psi$ is given by
$\Tr{\psi_1\cdots\psi_{n-1}\psi_n} =
\Innerm{\psi_n}{\psi_1\psi_2\cdots\psi_{n-1}}{\psi_n}$. Based on
this observation, we deduce that the NR
$W_d\Pa{\psi_1\psi_2\cdots\psi_{n-1}}$ of
$\psi_1\psi_2\cdots\psi_{n-1}$ corresponds precisely to the values
of $n$th-order Bargmann invariants obtained by varying $\psi_n$
while keeping $\psi_k$ (for $k=1,\ldots,n-1$) fixed.

Next, Lemma~\ref{lem:1} indicates that the set of all $n$th-order
Bargmann invariants defined by pure qudit states can be represented
by the union of a family of scaling and rotated elliptical disk.
(Throughout the whole paper, the term ``elliptical disk" includes
``degenerate ellipses," that is, line segments and points.)

\begin{lem}\label{lem:1}
For any pure states $(\psi_1,\ldots,\psi_{n-1})$ in $\complex^d$,
where $\psi_k\equiv\out{\psi_k}{\psi_k}$, the numerical range of
$\psi_1\cdots\psi_{n-1}$ is given by
$$
W_d(\psi_1\cdots\psi_{n-1}) =
\Big(\prod^{n-2}_{j=1}\Inner{\psi_j}{\psi_{j+1}}\Big)W_d(\out{\psi_1}{\psi_{n-1}}).
$$
Moreover, if $\prod^{n-2}_{j=1}\Inner{\psi_j}{\psi_{j+1}}=0$, then
$W_d(\psi_1\cdots\psi_{n-1}) =\set{0}$; or else, if
$\prod^{n-2}_{j=1}\Inner{\psi_j}{\psi_{j+1}}\neq0$, then
$W_d(\psi_1\cdots\psi_{n-1})$ can be reduced to the following
elliptical disk: {\scriptsize$$
\Set{z\in\complex:\abs{z}+\abs{z-\Tr{\psi_1\cdots\psi_{n-1}}}\leqslant\Abs{\prod^{n-2}_{j=1}\Inner{\psi_j}{\psi_{j+1}}}},
$$}
denoted by $\cE_{\psi_1,\ldots,\psi_{n-1}}$. Furthermore, we have
\begin{eqnarray}\label{eq:1}
\cB^\circ_n(d)
=\bigcup_{\psi_1,\ldots,\psi_{n-1}}\cE_{\psi_1,\ldots,\psi_{n-1}},
\end{eqnarray}
where all $\psi_k$'s run over the set of pure states in
$\complex^d$.
\end{lem}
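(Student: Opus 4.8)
The plan is to prove the lemma in three stages, matching the factorization of the operator, the identification of its numerical range through Proposition~\ref{prop:chien}, and the assembly of the union formula \eqref{eq:1}.

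First I would establish the operator identity $\psi_1\cdots\psi_{n-1} = \big(\prod_{j=1}^{n-2}\Inner{\psi_j}{\psi_{j+1}}\big)\out{\psi_1}{\psi_{n-1}}$. Writing each $\psi_k=\out{\psi_k}{\psi_k}$ and collapsing adjacent bra–ket pairs via $\out{\psi_k}{\psi_k}\,\out{\psi_{k+1}}{\psi_{k+1}} = \Inner{\psi_k}{\psi_{k+1}}\out{\psi_k}{\psi_{k+1}}$, a straightforward telescoping induction on $n$ peels off one scalar factor $\Inner{\psi_j}{\psi_{j+1}}$ at each step and leaves the single rank-one operator $\out{\psi_1}{\psi_{n-1}}$. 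Setting $\alpha:=\prod_{j=1}^{n-2}\Inner{\psi_j}{\psi_{j+1}}$ and using the elementary homogeneity of the numerical range, $W_d(\alpha\bsF)=\alpha\,W_d(\bsF)$ (immediate from $\Innerm{\psi}{\alpha\bsF}{\psi}=\alpha\Innerm{\psi}{\bsF}{\psi}$), gives the first displayed identity $W_d(\psi_1\cdots\psi_{n-1})=\alpha\,W_d(\out{\psi_1}{\psi_{n-1}})$.

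Next I would read off the two cases. If $\alpha=0$, the homogeneity identity forces $W_d(\psi_1\cdots\psi_{n-1})=\{0\}$, since the numerical range is nonempty. If $\alpha\neq 0$, I apply Proposition~\ref{prop:chien} to $\bsF=\out{\psi_1}{\psi_{n-1}}$ with $\bsu=\ket{\psi_1}$ and $\bsv=\ket{\psi_{n-1}}$: its numerical range is the elliptical disk $\{z:\abs{z}+\abs{z-\Inner{\psi_{n-1}}{\psi_1}}\leqslant\norm{\psi_1}\norm{\psi_{n-1}}\}$, where $\norm{\psi_1}\norm{\psi_{n-1}}=1$ since both states are normalized. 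Scaling this disk by $\alpha$ — that is, substituting $z\mapsto z/\alpha$ in the defining inequality and multiplying through by $\abs{\alpha}$ — carries the focus $\Inner{\psi_{n-1}}{\psi_1}$ to $\alpha\Inner{\psi_{n-1}}{\psi_1}$ and the length bound $1$ to $\abs{\alpha}$. The point is then to verify the two bookkeeping identities $\alpha\Inner{\psi_{n-1}}{\psi_1}=\Tr{\psi_1\cdots\psi_{n-1}}$ and $\abs{\alpha}=\Abs{\prod_{j=1}^{n-2}\Inner{\psi_j}{\psi_{j+1}}}$, after which the scaled disk is exactly $\cE_{\psi_1,\ldots,\psi_{n-1}}$. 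The degenerate subcases — $\psi_1\perp\psi_{n-1}$ yielding a centered circular disk, or $\alpha=0$ yielding the point $\{0\}$ — are absorbed by the stated convention that ``elliptical disk'' includes line segments and points; indeed, when $\alpha=0$ one has $\Tr{\psi_1\cdots\psi_{n-1}}=0$ and the same inequality collapses to $\{0\}$.

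Finally, for \eqref{eq:1} I would invoke the correspondence recorded just before the lemma: since $\Tr{\psi_1\cdots\psi_{n-1}\psi_n}=\Innerm{\psi_n}{\psi_1\cdots\psi_{n-1}}{\psi_n}$, letting $\psi_n$ range over all unit vectors of $\complex^d$ produces precisely $W_d(\psi_1\cdots\psi_{n-1})$ by the definition of the numerical range. Partitioning the defining set of $\cB^\circ_n(d)$ according to the first $n-1$ states then gives $\cB^\circ_n(d)=\bigcup_{\psi_1,\ldots,\psi_{n-1}}W_d(\psi_1\cdots\psi_{n-1})$, and substituting the elliptical-disk description of each numerical range from the previous step yields \eqref{eq:1}. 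I do not expect a genuine obstacle in this lemma: the factorization and the homogeneity of $W_d$ are routine, and Proposition~\ref{prop:chien} does the analytic work. The only point demanding care is the scaling step in the nonzero case — correctly tracking how multiplication by the complex scalar $\alpha$ rotates and rescales the ellipse so that the focus becomes $\Tr{\psi_1\cdots\psi_{n-1}}$ and the major-axis bound becomes $\abs{\alpha}$ — together with confirming that the degenerate configurations are consistently handled by the elliptical-disk convention.
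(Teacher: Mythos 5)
Your proposal is correct and follows essentially the same route as the paper's proof: collapse the product of projectors to the rank-one operator $\bigl(\prod_{j=1}^{n-2}\Inner{\psi_j}{\psi_{j+1}}\bigr)\out{\psi_1}{\psi_{n-1}}$, apply Proposition~\ref{prop:chien} together with homogeneity of the numerical range, and obtain Eq.~\eqref{eq:1} from the identity $\Tr{\psi_1\cdots\psi_{n-1}\psi_n}=\Innerm{\psi_n}{\psi_1\cdots\psi_{n-1}}{\psi_n}$ noted before the lemma. Your treatment is in fact slightly more explicit than the paper's (the $\alpha=0$ case and the scaling bookkeeping are spelled out rather than summarized), but there is no substantive difference.
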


\begin{proof}
For any tuple of pure states $(\psi_1,\ldots,\psi_{n-1})$ in
$\complex^d$, the product is of rank-one, i.e.,
$\psi_1\cdots\psi_{n-1}=\Big(\prod^{n-2}_{j=1}\Inner{\psi_j}{\psi_{j+1}}\Big)\out{\psi_1}{\psi_{n-1}}$.
This implies that
\begin{eqnarray*}
W_d(\psi_1\cdots\psi_{n-1})
=\Big(\prod^{n-2}_{j=1}\Inner{\psi_j}{\psi_{j+1}}\Big)W_d(\out{\psi_1}{\psi_{n-1}}).
\end{eqnarray*}
From Proposition~\ref{prop:chien}, we see that
$W_d(\out{\psi_1}{\psi_{n-1}})=\Set{z\in\complex:\abs{z}+\abs{z-\Inner{\psi_{n-1}}{\psi_1}}\leqslant1}$.
The last identity follows immediately by scaling and rotating the
elliptical disk $W_d(\out{\psi_1}{\psi_{n-1}})$ via the factor
$\prod^{n-2}_{j=1}\Inner{\psi_j}{\psi_{j+1}}$. We obtain the desired
conclusion.
\end{proof}
From Eq.~\eqref{eq:1} in Lemma~\ref{lem:1}, we see that
$\cB^\circ_n(d)$ can be represented as a family of elliptical discs.
If we can parametrize this family of ellipses
$\partial\cE_{\psi_1,\ldots,\psi_{n-1}}$, the boundaries of the
elliptical disks, we are able to calculate the boundary
$\partial\cB^\circ_n(d)$, as the envelope of this family of curves.
Next, we introduce the notion of envelope of a family of curves.

\subsection{Envelope of a family of plane curves}

The envelope of a family of curves is a curve that is tangent to
each member of the family at some point, and these points of
tangency together form the envelope itself. In other words, the
envelope is a curve that ``wraps around" or ``bounds" the entire
family of curves, touching each curve in the family at least once.
Mathematically, if the family of curves is defined by a parameter,
the envelope can often be found by solving a system of equations
derived from the family's defining equations and their derivatives
with respect to the parameter \cite{Bickel2020}. For instance, the
elliptical range theorem for the numerical range of a $2\times2$
complex matrix can be derived by the envelope method. We often
compute boundaries by determining envelopes and appending a finite
number of additional points. Let us recall some notion about
envelope.
\begin{definition}[Envelope, \cite{Bickel2020}]
Let $\cF$ denote a family of curves in the $xy$ plane (rectangular
coordinate system). We assume that $\cF$ is a family of curves given
by $F(x,y,t)=0$, where $F$ is smooth and $t$ lies in an open
interval. The envelope of $\cF$ is the set of points $(x,y)$ so that
there is a value of $t$ with both $F(x,y,t)=0$ and
$\partial_tF(x,y,t)=0$.
\end{definition}

The envelope problem can be extended to multiple parameters
\cite{Hull2020}. For example, for a family of plane curves, defined
by $f(x,y,t_1,t_2)=0$, where $t_1$ and $t_2$ are two parameters, the
envelope can be found by the standard approach --- eliminating $t_1$
and $t_2$ from the system $f(x,y,t_1,t_2)=0,
\partial_{t_1}f(x,y,t_1,t_2)=0$, and
$\partial_{t_2}f(x,y,t_1,t_2)=0$.

From the above definition, we can say that the envelope can be
computed by setting $F(x,y,t)=0$ and $\partial_tF(x,y,t)=0$ and then
eliminating $t$; this is the so-called envelope algorithm.

In what follows, we will use the envelope in a polar coordinate
system, and say that $\cF$ is a family of curves given by
$F(r,\theta,t)=0$. The envelope of $\cF$ is the set of points
$(r,\theta)$ so that there is a value of $t$ with both
$F(r,\theta,t)=0$ and $\partial_tF(r,\theta,t)=0$.

\section{Main result}
\label{sect:mainre}

In this section, we present polar descriptions of the boundaries of
the sets $\cB^\circ_n$ and prove that such sets are convex, where
$n\in\set{3,4}$. The result for the case where $n=4$ confirms the
conjecture given by Fernandes \emph{et. al.} in
\cite{Fernandes2024}. We conjecture that the unified,
dimension-independent formulation of the boundaries for sets of
3rd-order and 4th-order Bargmann invariants may extend to the
general case of the $n$th-order Bargmann invariants.
\begin{thrm}\label{th:one}
The equation of the boundary curve of the set $\cB^\circ_n(d)$,
where $n\in\set{3,4}$ and an arbitrary natural number $d\geqslant2$,
representing values $\Delta\in\complex$ that are quantum realizable
by some $n$th-order Bargmann invariant of qudit states is given by
\begin{equation}\label{eq:mainone}
r_n(\theta)e^{\mathrm{i}\theta}=\cos^n\Pa{\tfrac{\pi}{n}}
\sec^n\Pa{\tfrac{\theta-\pi}{n}} e^{\mathrm{i}\theta} =
\Pa{\tfrac{\omega_n+1}{\omega_n+e^{-\mathrm{i}\frac{2\theta}{n}}}}^n,
\end{equation}
where $\omega_n := e^{-\mathrm{i}\frac{2\pi}{n}}$ and $\theta \in
[0, 2\pi)$.
\end{thrm}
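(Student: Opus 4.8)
The plan is to read off $\partial\cB^\circ_n(d)$ directly from Lemma~\ref{lem:1}, which exhibits $\cB^\circ_n(d)$ as a union of elliptical disks. The structural fact I would exploit is that each disk $\cE_{\psi_1,\ldots,\psi_{n-1}}$ contains the origin (since $\Abs{0}+\Abs{0-f}=\Abs{f}\leqslant\Abs{P}$, with $P=\prod^{n-2}_{j=1}\Inner{\psi_j}{\psi_{j+1}}$ and far focus $f=\Tr{\psi_1\cdots\psi_{n-1}}$), so the union is star-shaped about $0$ and its boundary is encoded in polar form by the radial supremum $r_n(\theta)=\sup r_{\mathrm{ell}}(\theta)$ over admissible tuples, where, writing the ellipse with one focus at the origin,
\begin{equation*}
r_{\mathrm{ell}}(\theta)=\frac{\Abs{P}^2-\Abs{f}^2}{2\Pa{\Abs{P}-\Abs{f}\cos(\theta-\arg f)}}.
\end{equation*}
The boundary curve is then the maximizer of this expression over the admissible focus $f$ and semi-axis $\Abs{P}$; equivalently it is the envelope of this family, and I would cross-check the maximization against the envelope algorithm.

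For $n=3$ there is a single parameter: here $f=\Inner{\psi_1}{\psi_2}\Inner{\psi_2}{\psi_1}=p^2$ is real with $p=\Abs{\Inner{\psi_1}{\psi_2}}\in[0,1]$, and $\Abs{P}=p$. Maximizing $r_{\mathrm{ell}}(\theta)=\frac{p(1-p^2)}{2(1-p\cos\theta)}$ over $p$, the stationarity condition collapses to $\cos\theta=\tfrac{3p^2-1}{2p^3}$, whence $1-p\cos\theta=\tfrac{1-p^2}{2p^2}$ and $r_3(\theta)=p^3$ with $p=\tfrac12\sec\Pa{\tfrac{\theta-\pi}{3}}$; I would then close the case using the triple-angle identity $\cos 3\alpha=4\cos^3\alpha-3\cos\alpha$ to recognize $r_3(\theta)=\cos^3\Pa{\tfrac{\pi}{3}}\sec^3\Pa{\tfrac{\theta-\pi}{3}}$.

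For $n=4$ the far focus is $f=\Inner{\psi_1}{\psi_2}\Inner{\psi_2}{\psi_3}\Inner{\psi_3}{\psi_1}=\Delta_3(\psi_1,\psi_2,\psi_3)$ and $\Abs{P}=\Abs{\Inner{\psi_1}{\psi_2}}\Abs{\Inner{\psi_2}{\psi_3}}$, so the admissible foci are exactly those allowed by positivity of the $3\times3$ Gram matrix $G$ of $\psi_1,\psi_2,\psi_3$, i.e. $\det G=1-\Abs{c_1}^2-\Abs{c_2}^2-\Abs{c_3}^2+2\Re(c_1c_2c_3)\geqslant0$, where $c_1,c_2,c_3$ are the three cyclic overlaps. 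Since $r_{\mathrm{ell}}$ increases in $\Abs{P}$ at fixed $f$, the optimization decouples into three stages: first push $\Abs{P}$ until $\det G=0$, which (by arithmetic--geometric mean applied to $\Abs{c_1}^2+\Abs{c_2}^2$) forces $\Abs{c_1}=\Abs{c_2}=:\mu$ and gives $\mu^2=\tfrac{1-\nu^2}{2(1-\nu\cos\gamma)}$ with $\nu:=\Abs{c_3}$ and $\gamma:=\arg f$, reducing the objective to $\tfrac{(1-\nu^2)^2}{4(1-\nu\cos\gamma)(1-\nu\cos(\theta-\gamma))}$; second, minimize the denominator product over $\gamma$, whose symmetric critical point $\gamma=\theta/2$ dominates the competing branch $\cos\delta=\cos(\theta/2)/\nu$ (the latter yields values $\leqslant\tfrac14$); third, optimize over $\nu$, obtaining $\nu=\tfrac{1-\sin(\theta/2)}{\cos(\theta/2)}$. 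Substituting back yields $r_4(\theta)=\tfrac{1}{(1+\sin(\theta/2))^2}$, which I would convert to $\cos^4\Pa{\tfrac{\pi}{4}}\sec^4\Pa{\tfrac{\theta-\pi}{4}}$ via $1+\sin(\theta/2)=1+\cos\Pa{\tfrac{\pi-\theta}{2}}=2\cos^2\Pa{\tfrac{\theta-\pi}{4}}$.

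Three remarks finish the argument. The vanishing of $\det G$ at the optimum shows the extremal states are linearly dependent, hence realizable in $\complex^2$; combined with the fact that the $\psi_n$ attaining a numerical-range vertex lies in the span of the fixed states, this gives $\cB^\circ_n(d)=\cB^\circ_n(2)$ for all $d\geqslant2$, i.e. dimension independence. The symmetry of $\cB^\circ_n$ about the real axis extends the curve from $\theta\in[0,\pi]$ to $\theta\in[0,2\pi)$, and the explicit formula shows the enclosed region is convex, so $\cB^\bullet_n(d)=\cB^\circ_n(d)$. Finally, I would reconcile the trigonometric form with the rational form by factoring the phases $e^{-\mathrm{i}\pi/n}$ and $e^{-\mathrm{i}(\pi+\theta)/n}$ out of the numerator $\omega_n+1$ and denominator $\omega_n+e^{-\mathrm{i}2\theta/n}$ respectively, turning the bracket into $\tfrac{\cos(\pi/n)}{\cos((\theta-\pi)/n)}\,e^{\mathrm{i}\theta/n}$, whose $n$th power reproduces both the modulus $\cos^n(\pi/n)\sec^n((\theta-\pi)/n)$ and the phase $e^{\mathrm{i}\theta}$. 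The main obstacle I anticipate is the $n=4$ constrained optimization --- rigorously justifying the Gram-boundary reduction with $\Abs{c_1}=\Abs{c_2}$, selecting the symmetric focus direction $\gamma=\theta/2$ over the spurious branch and the endpoint configurations, and pushing the two-parameter elimination to the clean closed form; the uniform emergence of the triple-angle structure in both $n=3$ and $n=4$ is what makes me expect the same computation to package into a single dimension-independent formula.
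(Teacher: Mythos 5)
Your proposal is correct, and it reaches the stated boundary formula by a route that genuinely differs from the paper's in the $n=4$ case. The paper first proves the product identities $\cB^\circ_{2m-1}(d)=\bigcup\Inner{\psi_1}{\psi_2}\prod_j W_d(\out{\psi_{j+1}}{\psi_j})$ and $\cB^\circ_{2m}(d)=\bigcup\prod_j W_d(\out{\psi_{j+1}}{\psi_j})$ (its Appendix A), reduces $\cB^\circ_4(d)$ to the union of Minkowski squares $\bigcup_{t\in[0,1]}E^2_t$, and then runs the envelope algorithm twice, first over the parameter $\alpha$ on $\partial E_t$ and then over $t$; you instead bypass the Minkowski structure entirely, keep the single-ellipse family of Lemma~\ref{lem:1} with far focus $f=\Delta_3(\psi_1,\psi_2,\psi_3)$, characterize the admissible foci by positivity of the $3\times3$ Gram matrix, and use monotonicity of the focal radial function in $\Abs{P}$ to push the optimum onto $\det G=0$ with $\Abs{c_1}=\Abs{c_2}$. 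Remarkably, your reduced objective $\frac{(1-\nu^2)^2}{4(1-\nu\cos\gamma)(1-\nu\cos(\theta-\gamma))}$ is exactly the paper's two-parameter family $\tilde F(r,\theta,t,\alpha)$ under $(\nu,\gamma)\leftrightarrow(t,\alpha)$, so the two proofs converge on the same elimination; I checked your stationarity computations ($\gamma=\theta/2$, $\nu=\frac{1-\sin(\theta/2)}{\cos(\theta/2)}$, the spurious branch $\cos\delta=\cos(\theta/2)/\nu$ giving $r\leqslant\frac14\leqslant r_4(\theta)$, and for $n=3$ the unique root of $1-3p^2+2p^3\cos\theta$ in $(0,1)$) and they are all sound. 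Your formulation buys two things the paper's does not make explicit: the star-shapedness/radial-supremum framing rigorously certifies which envelope branch is the true boundary (the paper merely asserts ``only one root is the desired one''), and the singularity of the extremal Gram matrix yields qubit realizability of the boundary --- and indeed of the whole radial optimization, since for $d=2$ the admissible set is precisely the singular-Gram slice on which your optimum already lies, so the theorem's claim for every $d\geqslant2$ follows without the extra numerical-range-vertex argument you sketch (which is the one hand-wavy spot in your write-up, but it is not needed for the statement as given). What the paper's Minkowski-product identities buy in exchange is a structure valid for all $n$, which underpins its Conjecture 1; your Gram-constrained optimization would need the full $n\times n$ Gram positivity to scale similarly.
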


We should mention that this result recovers the same findings by
Fernandes \emph{et. al.}, where
$\cB^\circ_3(d)=\cB^\circ_3(2)=\cB^\circ_3=\mathrm{ConvexHull}(\partial\cB^\circ_3)$.
We begin by proving the result for $n=3$. Before the formal proof,
let us recall a well-known fact about cubic equations:  Denote by
$D=-4p^3-27q^2=-108\Pa{\Pa{\frac p3}^3+\Pa{\frac q2}^2}$ the
discriminant of the cubic equation of the form $x^3+px+q=0$, then
\begin{itemize}
\item If $D>0$, the equation $x^3+px+q=0$ has three distinct
\emph{real}
roots;
\item If $D=0$, the equation has a multiple root (either a double root or a triple root).
\item If $D<0$, the equation has one real root and two complex conjugate roots.
\end{itemize}
According to \cite{Fernandes2024}, the rectangular equation of the
boundary curve of $\cB^\circ_3$ is given by $x^2+y^2 =
\Pa{\tfrac{2x+1}3}^3$, which can be equivalently rewritten in polar
equation $3r^{\frac{2}{3}} = 2r \cos\theta + 1$, where $\theta \in
[0, 2\pi)$ and $r>0$ for all $\theta$. Let $s = r^{-\frac{1}{3}}
> 0$, yielding the cubic equation $s^3 - 3s + 2\cos\theta = 0$. Since its discriminant $D=108\sin^2\theta>0$ unless $\theta \in \set{0,
\pi}$, the equation has three distinct real roots. Solving it yields
the positive root $s = 2\cos\Pa{\frac{\theta-\pi}{3}} =
\sec\Pa{\frac{\pi}{3}} \cos\Pa{\frac{\theta-\pi}{3}}$. Therefore,
the corresponding complex point on the boundary is
$$
re^{\mathrm{i}\theta} = s^{-3}e^{\mathrm{i}\theta} =
\cos^3\Pa{\tfrac{\pi}{3}} \sec^3\Pa{\tfrac{\theta-\pi}{3}}
e^{\mathrm{i}\theta}.
$$
In what follows, we will present an independent proof by employing
the numerical range of a complex square matrix and the envelope
method of a family of plane curves.
\begin{proof}[Proof of Theorem~\ref{th:one}]
Note that, based on Lemma~\ref{lem:1}, we can infer that, for any
positive integers $m\geqslant2$, via $\psi_{m+1}\equiv\psi_1$,
\begin{eqnarray}
\cB^\circ_{2m-1}(d) &=&
\bigcup_{\psi_1,\ldots,\psi_m}\Inner{\psi_1}{\psi_2}\prod^m_{j=2}W_d(\out{\psi_{j+1}}{\psi_j}),\label{eq:3}\\
\cB^\circ_{2m}(d) &=&
\bigcup_{\psi_1,\ldots,\psi_m}\prod^m_{j=1}W_d(\out{\psi_{j+1}}{\psi_j}),\label{eq:4}
\end{eqnarray}
where all $\psi_k$'s run over the set of pure states in
$\complex^d$, $\prod^m_{j=1}W_d(\out{\psi_{j+1}}{\psi_j})$ is the
Minkowski product (the Minkowski product of two sets $A$ and $B$ in
$\complex$ is defined by $\set{ab:a\in
A,b\in B}$, denoted by $AB$). The proofs of both identities are provided in Appendix~\ref{app:ndproofs}.\\
(1) If $n=3$, then by Lemma~\ref{lem:1} or Eq.~\eqref{eq:3}, we get
that $\cB^\circ_3(d)
=\bigcup_{\psi_1,\psi_2}\Inner{\psi_1}{\psi_2}W_d(\out{\psi_1}{\psi_2})$,
where
$W_d(\out{\psi_1}{\psi_2})=\Set{z\in\complex:\abs{z}+\abs{z-\Inner{\psi_2}{\psi_1}}\leqslant1}$.
Let $\Inner{\psi_1}{\psi_2}=te^{\mathrm{i}\theta}$, where
$t\in[0,1]$ and $\theta\in[0,2\pi)$. Then
\begin{eqnarray*}
\cB^\circ_3(d)=\bigcup_{t\in[0,1]}\Set{z\in\complex:\abs{z}+\abs{z-t^2}\leqslant
t}=\bigcup_{t\in[0,1]} \cE_t,
\end{eqnarray*}
where $\cE_t:=\Set{z\in\complex:\abs{z}+\abs{z-t^2}\leqslant t}$
whose boundary curve $\partial\cE_t$ can be put in polar form:
$r(\theta)=\frac{t(1-t^2)}{2(1-t\cos\theta)}$. Thus,
$\partial\cB^\circ_3(d)$ can be viewed as the envelope of the family
$\cF$ of ellipses $\partial\cE_t$, defined by
$F(r,\theta,t):=r(1-t\cos\theta)-\frac{t(1-t^2)}2=0$. Then,
$\partial _tF(r,\theta,t)=\frac12(3t^2-2r\cos\theta-1)=0$. By an
envelope algorithm, eliminating $t$ in both $F(r,\theta,t)=0$ and
$\partial_tF(r,\theta,t)=0$, we obtain that the envelope of $\cF$ is
implicitly given by
$$
(8\cos^3\theta)r^3+(12\cos^2\theta-27)r^2+(6\cos\theta)r+1=0.
$$
When we view it as a cubic equation in argument $r$, we solve it to
get three real roots
$\cos^3(\frac\pi3)\sec^3(\frac{\theta-\pi}3),\cos^3(\frac\pi3)\sec^3(\frac{\theta-\pi}3+\frac23\pi),\cos^3(\frac\pi3)\sec^3(\frac{\theta-\pi}3+\frac43\pi)$.
But, only one root $r=\cos^3(\frac\pi3)\sec^3(\frac{\theta-\pi}3)$
is
the desired one.\\
(2) If $n=4$, then still by Eq.~\eqref{eq:4}, we get that
\begin{eqnarray*}
\cB^\circ_4(d) =\bigcup_{\psi_1,\psi_2}
W_d(\out{\psi_2}{\psi_1})W_d(\out{\psi_1}{\psi_2}),
\end{eqnarray*}
where $W_d(\out{\psi_2}{\psi_1})W_d(\out{\psi_1}{\psi_2})$ is the
Minkowski product of two sets in $\complex$. Let
$\Inner{\psi_1}{\psi_2}=te^{\mathrm{i}\phi}$, where $t\in[0,1]$ and
$\phi\in[0,2\pi)$. Then
$W_d(\out{\psi_2}{\psi_1})=\set{z\in\complex:\abs{z}+\abs{z-te^{\mathrm{i}\phi}}\leqslant1}=e^{\mathrm{i}\phi}\set{z\in\complex:\abs{z}+\abs{z-t}\leqslant1}$
and
$W_d(\out{\psi_1}{\psi_2})=e^{-\mathrm{i}\phi}\set{z\in\complex:\abs{z}+\abs{z-t}\leqslant1}$.
Denote by $E_t:=\set{z\in\complex:\abs{z}+\abs{z-t}\leqslant1}$.
Therefore we get that the Minkowski product
$W_d(\out{\psi_2}{\psi_1})W_d(\out{\psi_1}{\psi_2})=E_tE_t\equiv
E^2_t$, which leads to the result
\begin{eqnarray*}
\cB^\circ_4(d) =
\bigcup_{t\in[0,1],\phi\in[0,2\pi)}e^{\mathrm{i}\phi}E_te^{-\mathrm{i}\phi}E_t=\bigcup_{t\in[0,1]}E^2_t.
\end{eqnarray*}
For any fixed $t\in(0,1)$, it is easily seen that $E^2_t
=\bigcup_{z\in\partial E_t}zE_t$, where $z\in\partial E_t$ can be
parameterized as $z=\frac{1-t^2}{2(1-t\cos
\alpha)}e^{\mathrm{i}\alpha}$ for $\alpha\in[0,2\pi)$. Then
\begin{eqnarray*}
E^2_t = \bigcup_{\alpha\in[0,2\pi)} \tfrac{1-t^2}{2(1-t\cos
\alpha)}e^{\mathrm{i}\alpha}E_t,
\end{eqnarray*}
where $\frac{1-t^2}{2(1-t\cos\alpha)}e^{\mathrm{i}\alpha}E_t$ is the
elliptical disk
$$
\Set{z\in\complex: \abs{z}+\abs{z-\tfrac{1-t^2}{2(1-t\cos
\alpha)}te^{\mathrm{i}\alpha}}\leqslant
\tfrac{1-t^2}{2(1-t\cos\alpha)}}
$$
whose boundary can parametrized in polar form
$z=re^{\mathrm{i}\theta}$, where $r$ and $\theta$ can be connected
as
$$
r=\frac{(1-t^2)^2}{4(1-t\cos\alpha)(1-t\cos(\alpha-\theta))}.
$$
Define the family $\tilde\cF$ of curves determined by $\tilde
F(r,\theta,t,\alpha):=(1-t\cos\alpha)(1-t\cos(\alpha-\theta))r-\frac{(1-t^2)^2}4=0$.
Then $\tilde
F_\alpha(r,\theta,t,\alpha)=rt\Br{\sin(\alpha-\theta)+\sin\alpha-t
\sin(2\alpha-\theta)}$. The boundary curve of $E^2_t$ can be
computed as the envelope of the family $\tilde\cF$. In fact, by
envelope algorithm, eliminating $\alpha$ by setting
$F(r,\theta,t,\alpha)=0$ and $F_\alpha(r,\theta,t,\alpha)=0$, we get
that $r=\frac{(1-t^2)^2}{4(1-t\cos\frac\theta2)^2}$ is the polar
equation of $\partial(E^2_t)$, the boundary of the Minkowski product
of $E_t$ with itself.

Now $\cB^\circ_4(d)=\bigcup_{t\in[0,1]}E^2_t$, where
$\partial(E^2_t)$ is parameterized in polar form
$r=\frac{(1-t^2)^2}{4(1-t\cos\frac\theta2)^2}$. Once again, we
define the family $\cG$ of curves by
$G(r,\theta,t):=(1-t\cos\frac\theta2)\sqrt{r} - \frac{1-t^2}2=0$.
Then $\partial_tG(r,\theta,t)=t-\sqrt{r}\cos\frac\theta2$. Now the
boundary $\partial\cB^\circ_4(d)$, as the envelope of the family
$\cG$ of curves, can be computed as by setting $G(r,\theta,t)=0$ and
$\partial_tG(r,\theta,t)=0$ by envelope algorithm. Eliminating $t$,
we get that $r=\frac1{(\sin\frac\theta4\pm\cos\frac\theta4)^4}$. But
only
$r=\frac1{(\sin\frac\theta4+\cos\frac\theta4)^4}=\cos^4(\frac\pi4)\sec^4(\frac{\theta-\pi}4)$
is indeed the envelope.
\end{proof}

We remark here that
\begin{eqnarray}\label{eq:2}
\cB^\circ_3(d) = \bigcup_{t\in[0,1]}tE_t\text{ and }\cB^\circ_4(d) =
\bigcup_{t\in[0,1]}E^2_t,
\end{eqnarray}
where $E_t=\set{z\in\complex:\abs{z}+\abs{z-t}\leqslant1}$ is the
elliptical disk. Clearly from the right hand sides in
Eq.~\eqref{eq:2}, they are independent of the dimension
$d\geqslant2$. The result obtained for $\cB^\circ_4(d)$ in
Theorem~\ref{th:one} \emph{confirms} the conjecture given by
Fernandes \emph{et al.} \cite{Fernandes2024}, where the authors
conjectured that
$r(\theta)=\frac1{(\sin\frac\theta4+\cos\frac\theta4)^4}$, the
boundary of the set of circulant matrices $\cB_4|_{\text{circ}}$ (in
their notation), is equal to the boundary $\cB_4$ of the set of
\emph{all} 4th-order Bargmann invariants. In fact, they conjectured
that, for $n\geqslant3$, in their notations,
$$
\cB_n=\cB_{n,2}=\mathrm{ConvexHull}(\cB_n)=\mathrm{ConvexHull}(\cB_n|_{\mathrm{circ}}).
$$

Note that the inner product is a dimension-independent quantity when
restricted to the set of pure states, as the inner product between
any two pure states $\ket{\psi}$ and $\ket{\phi}$ in $\complex^d$
always lies within the closed unit disk in $\complex$ (i.e.,
$\abs{\Inner{\psi}{\phi}}\leqslant1$), regardless of the dimension
$d\geqslant2$. This property holds universally across all finite
dimensions. Based on the results for the 3rd and 4th-order Bargmann
invariants, it is reasonable to believe that the sets of all
$n$th-order Bargmann invariants in the underlying space $\complex^d$
are dimension independent. This is consistent with Fernandes
\emph{et al.}'s conjecture \cite{Fernandes2024}. We then rephrase it
as follows:
\begin{itemize}
\item $\cB^\circ_n(d)$ is independent of $d\geqslant2$ for arbitrary positive integer $n\geqslant3$. In other words,
$\cB^\circ_n(d)=\cB^\circ_n(2)=\cB^\circ_n$ for all $d\geqslant2$
and $n\geqslant3$.
\end{itemize}
Let $\cR_n$ denote the closed region enclosed by the curve
$r_n(\theta)e^{\mathrm{i}\theta}=\cos^n(\frac\pi
n)\sec^n(\frac{\theta-\pi}n)e^{\mathrm{i}\theta}$ in $\complex$. It
is easily seen that $\cR_n\subseteq\cB^\circ_n$ by
Proposition~\ref{prop:2}. Building upon Fernandes \emph{et al.}'s
conjecture and Theorem~\ref{th:one}, we are led to formulate the
subsequent
conjecture:\\~\\
\textbf{Conjecture 1.} {\em For $n\geqslant3$, it holds that
$\cB^\circ_n=\cR_n$. That is, $\partial\cB^\circ_n$ can be described
analytically in polar form of the boundary of $\cR_n$:
$r_n(\theta)=\cos^n(\frac\pi n)\sec^n(\frac{\theta-\pi}n)$, where
$\theta\in[0,2\pi)$.}

Both Fernandes \emph{et al.}'s conjecture and Conjecture 1 were
confirmed for $n=3$ \cite{Fernandes2024}. Additionally, they
provided numerical support for the case $n= 4$, which is ultimately
proven by our Theorem~\ref{th:one}.

As a direct implication by assuming that the above conjectures are
correct, i.e., $\cB^\circ_n=\cR_n$, we have
$$
\lim_{n \to \infty} \cos^n\Pa{\tfrac{\pi}{n}}
\sec^n\Pa{\tfrac{\theta-\pi}{n}} e^{\mathrm{i}\theta} =
e^{\mathrm{i}\theta} \quad \text{for all } \theta \in [0, 2\pi],
$$
indicating that the family of boundary curves $\partial \cB^\circ_n$
converges to the unit circle as $n \to \infty$:
$$
\cB^\circ_3 \subsetneq \cB^\circ_4 \subsetneq \cdots \subsetneq
\cB^\circ_n \subsetneq \cdots \subsetneq \set{\abs{z}=1 : z \in
\complex}.
$$
This containment relationship suggests that higher-order Bargmann
invariants demonstrate enhanced effectiveness as witnesses of
quantum imaginarity. The boundary curves $\partial\cR^\circ_n$ are
plotted in Fig.~\ref{fig:bcurves}.
\begin{figure}[ht]\centering
{\begin{minipage}[b]{1\linewidth}
\includegraphics[width=0.7\textwidth]{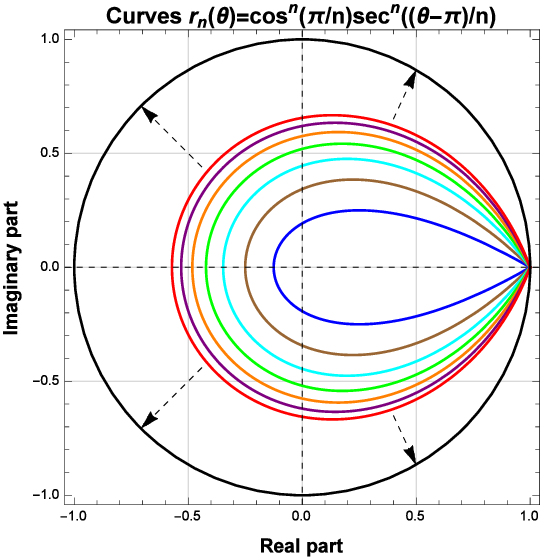}
\end{minipage}}
\caption{(Color Online) The family of curves in polar form
$r_n(\theta)=\cos^n(\frac\pi n)\sec^n(\frac{\theta-\pi}n)$, where
$n\in\set{3(\text{blue}),4(\text{brown}),5(\text{cyan}),6(\text{green}),7(\text{orange}),8(\text{purple}),9(\text{red})}$.
Note that $n\in\set{3,4}$, $r_n(\theta)=\cos^n(\frac\pi
n)\sec^n(\frac{\theta-\pi}n)$ is indeed shown to be
$\partial\cB^\circ_n$ by Theorem~\ref{th:one}. The black curve is
the unit circle. Here the horizontal axis means the real part
$\mathrm{Re}(\Tr{\psi_1\cdots\psi_n})$ and the vertical axis stands
for the imaginary part
$\mathrm{Im}(\Tr{\psi_1\cdots\psi_n})$.}\label{fig:bcurves}
\end{figure}

\begin{cor}
It holds that
\begin{eqnarray}\label{eq:largimpart}
\cR_n\subset[-\cos^n\Pa{\tfrac\pi n},1] \times [-\tau_n,\tau_n],
\end{eqnarray}
where $\tau_n=\cos^n\Pa{\tfrac\pi
n}\sec^{n-1}\Pa{\tfrac\pi{2(n-1)}}$.
\end{cor}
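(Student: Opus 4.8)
The plan is to exploit that $\cR_n$ is a compact region whose topological boundary is exactly the simple closed polar curve $\gamma(\theta)=r_n(\theta)e^{\mathrm{i}\theta}$ with $r_n(\theta)=\cos^n\Pa{\tfrac\pi n}\sec^n\Pa{\tfrac{\theta-\pi}{n}}$; the curve is simple because $r_n>0$ assigns a single point to each direction $e^{\mathrm{i}\theta}$. Since the two Cartesian coordinate functions $x=\mathrm{Re}\,z$ and $y=\mathrm{Im}\,z$ are linear, a nonzero linear functional has no interior critical point, so each of their maxima and minima over $\cR_n$ is attained on $\gamma$. It therefore suffices to bound $x(\theta)=r_n(\theta)\cos\theta$ and $y(\theta)=r_n(\theta)\sin\theta$ over $\theta\in[0,2\pi)$; I abbreviate $c:=\cos^n\Pa{\tfrac\pi n}$.

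For the horizontal extent I would substitute $u=\tfrac{\theta-\pi}{n}\in[-\tfrac\pi n,\tfrac\pi n)$, which turns $x$ into $-c\,\cos(nu)\cos^{-n}u$. The bound $x\le1$ is immediate from $r_n(\theta)\le1$ (the factor $\sec^n$ is minimal, equal to $1$, at $u=0$ and maximal, equal to $1/c$, at the endpoints $u=\pm\tfrac\pi n$), with equality at $\theta=0$. For the lower bound I differentiate, getting $\tfrac{d}{du}\big[\cos(nu)\cos^{-n}u\big]=-n\sin\Pa{(n-1)u}\cos^{-n-1}u$, whose only zero in the open interval is $u=0$ since $(n-1)\abs{u}<\pi$ there; the sign of the derivative shows $x$ is minimized at $u=0$, i.e.\ $\theta=\pi$, giving $x=-c$. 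Hence $x\in[-\cos^n\Pa{\tfrac\pi n},1]$.

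For the vertical extent I use the reflection symmetry $r_n(2\pi-\theta)=r_n(\theta)$ (evenness of $\sec$), which gives $y(2\pi-\theta)=-y(\theta)$, so it is enough to maximize $y$ on the upper half $\theta\in(0,\pi)$ and set the minimum equal to its negative. Writing $v=\tfrac{\pi-\theta}{n}\in(0,\tfrac\pi n)$ converts $y$ into $c\,\sin(nv)\cos^{-n}v$, and differentiation yields the clean form $\tfrac{d}{dv}\big[\sin(nv)\cos^{-n}v\big]=n\cos\Pa{(n-1)v}\cos^{-n-1}v$. Its unique critical point in $(0,\tfrac\pi n)$ is $v^\ast=\tfrac{\pi}{2(n-1)}$, which lies in the interval precisely because $n\geqslant3$ forces $\tfrac{\pi}{2(n-1)}<\tfrac\pi n$; the sign change of $\cos\Pa{(n-1)v}$ confirms this is the maximum.

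The decisive step is evaluating $y$ at $v^\ast$. The key simplification is $nv^\ast=\tfrac{n\pi}{2(n-1)}=\tfrac\pi2+\tfrac{\pi}{2(n-1)}=\tfrac\pi2+v^\ast$, so that $\sin(nv^\ast)=\cos v^\ast$ and the factor $\sin(nv^\ast)\cos^{-n}v^\ast$ collapses to $\cos^{-(n-1)}v^\ast=\sec^{n-1}\Pa{\tfrac{\pi}{2(n-1)}}$. This produces $\max y=\cos^n\Pa{\tfrac\pi n}\sec^{n-1}\Pa{\tfrac{\pi}{2(n-1)}}=\tau_n$, and by symmetry $\min y=-\tau_n$, giving the inclusion $\cR_n\subset[-\cos^n\Pa{\tfrac\pi n},1]\times[-\tau_n,\tau_n]$. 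I expect this trigonometric collapse at the critical angle---together with verifying that the single interior critical point really is the global optimum for each coordinate---to be the only genuinely nontrivial part; the remainder is routine calculus.
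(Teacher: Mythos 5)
Your proof is correct and follows essentially the same route as the paper: both reduce the problem to the boundary curve and maximize $\sec^n\Pa{\tfrac{\theta-\pi}{n}}\sin\theta$ by single-variable calculus, locating the same stationary point (your $v^\ast=\tfrac{\pi}{2(n-1)}$ is exactly the paper's $\theta^\ast=\tfrac{n-2}{n-1}\tfrac{\pi}{2}$ under your substitution $v=\tfrac{\pi-\theta}{n}$) and exploiting the same collapse $\sin(nv^\ast)=\cos v^\ast$ at the critical angle. Your only additions are cosmetic or supplementary---the substitution that cleans up the derivative, and an explicit verification of the horizontal range $[-\cos^n\Pa{\tfrac\pi n},1]$, which the paper asserts directly from the polar equation without detail.
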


\begin{proof}
Denote by $f_n(\theta):=\sec^n\Pa{\tfrac{\theta-\pi}n}\sin \theta$
and let $\tau_n:=\cos^n\Pa{\tfrac\pi n}\max_{\theta\in[0,\pi]}
f_n(\theta)$. From Eq.~\eqref{eq:mainone}, it follows that
$\cR_n\subset[-\cos^n\Pa{\tfrac\pi n},1] \times [-\tau_n,\tau_n]$.
Next we identify $\tau_n$. In fact,
$f'_n(\theta)=\sec^{n+1}\Pa{\tfrac{\theta-\pi}n}\cos\Pa{\tfrac\pi
n+\Pa{1-\tfrac1n}\theta}=0$ implies that $\cos\Pa{\frac\pi
n+\Pa{1-\frac1n}\theta}=0$ because
$\sec\Pa{\frac{\theta-\pi}n}=\sec\Pa{\frac{\abs{\theta-\pi}}n}>0$
(here $\frac{\abs{\theta-\pi}}n\leqslant\frac\pi n$). Thus $\frac\pi
n+\Pa{1-\frac1n}\theta=\frac\pi2$, then $\theta=
\frac{n-2}{n-1}\frac\pi2<\frac\pi2$ is the stationary point of
$f_n(\theta)$ in $[0,\pi]$. Based on the above reasoning, we get
that
\begin{eqnarray*}
\tau_n &=& \cos^n\Pa{\tfrac\pi n}f_n(\tfrac{n-2}{n-1}\tfrac\pi2)\\
&=& \cos^n\Pa{\tfrac\pi n}\sec^{n-1}\Pa{\tfrac\pi{2(n-1)}},
\end{eqnarray*}
as desired.
\end{proof}
We remark here that $\tau_n$ could potentially predicts the largest
imaginary part of $n$th-order Bargmann invariants if it were true
that $\cR_n=\cB^\circ_n$. In fact,
$\tau_3=\tfrac14,\tau_4=\tfrac2{3\sqrt{3}},\tau_5=\tfrac{(3-2\sqrt{2})(11+5\sqrt{5})}8,\tau_6=\tfrac{27}{100}\sqrt{5(50-22\sqrt{5})}$.
For illustration, considering specific values of $n\in\set{3,4}$, we
present another approach to $\tau_3$ and $\tau_4$ in
Appendix~\ref{app:B}. Furthermore, it is observed that all
$n$th-order Bargmann invariants along the curve $r_n(\theta)$
exhibiting the largest possible imaginary part invariably possess a
non-zero real component.

\begin{prop}\label{prop:2}
The curve $r_n(\theta)=\cos^n(\frac\pi n)\sec^n(\frac{\theta-\pi}n)$
determined by Eq.~\eqref{eq:mainone}, can be attained by a family of
single-parameter qubit pure states.
\end{prop}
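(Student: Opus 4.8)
The plan is to prove realizability constructively, by writing down an explicit one-parameter family of qubit pure states whose $n$th-order Bargmann invariant sweeps out the entire curve \eqref{eq:mainone}. Concretely, I would take, for a single real parameter $\beta\in[0,\tfrac{\pi}{2}]$,
\[
\ket{\psi_k}=\cos\beta\,\ket{0}+\sin\beta\,e^{\mathrm{i}\frac{2\pi k}{n}}\ket{1},\qquad k=1,\ldots,n,
\]
all lying in $\complex^2$, with the cyclic convention $\ket{\psi_{n+1}}\equiv\ket{\psi_1}$ understood in the last overlap. These are manifestly unit vectors, so they constitute a legitimate single-parameter family of qubit states, which in particular exhibits that the curve is attained already in dimension $d=2$.

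First I would compute the consecutive overlaps. Since neighbouring states differ only by the fixed relative phase $e^{\mathrm{i}2\pi/n}$ on the $\ket{1}$ component, every overlap collapses to the same value,
\[
\Inner{\psi_k}{\psi_{k+1}}=\cos^2\beta+\sin^2\beta\,e^{\mathrm{i}\frac{2\pi}{n}}=:z,\qquad k=1,\ldots,n,
\]
the crucial point being that the wrap-around term $\Inner{\psi_n}{\psi_1}$ also equals $z$, because $e^{-\mathrm{i}2\pi(n-1)/n}=e^{\mathrm{i}2\pi/n}$. Consequently the Bargmann invariant factorizes completely as $\Delta_n(\Psi)=z^n$, reducing the whole problem to understanding the single complex number $z$.

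Next I would identify $z^n$ with the boundary curve. The expression $z=\cos^2\beta+\sin^2\beta\,e^{\mathrm{i}2\pi/n}$ shows that $z$ traces the chord of the unit circle joining $1$ and $e^{\mathrm{i}2\pi/n}$ as $\sin^2\beta$ runs over $[0,1]$. The key identity is the projection
\[
\operatorname{Re}\!\Pa{z\,e^{-\mathrm{i}\pi/n}}=\Pa{\cos^2\beta+\sin^2\beta}\cos\tfrac{\pi}{n}=\cos\tfrac{\pi}{n},
\]
which states that every such $z$ lies on the line perpendicular to the bisector direction $e^{\mathrm{i}\pi/n}$ at signed distance $\cos\tfrac{\pi}{n}$ from the origin. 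Writing $\varphi=\arg z$, this rearranges to $\abs{z}=\cos\tfrac{\pi}{n}\sec\!\Pa{\varphi-\tfrac{\pi}{n}}$, and setting $\theta:=n\varphi$ gives $z=\tfrac{\cos(\pi/n)}{\cos((\theta-\pi)/n)}e^{\mathrm{i}\theta/n}$, hence $z^n=\cos^n\Pa{\tfrac{\pi}{n}}\sec^n\Pa{\tfrac{\theta-\pi}{n}}e^{\mathrm{i}\theta}$, which is exactly the curve in \eqref{eq:mainone}. A short computation factoring $\omega_n+1=2\cos\tfrac{\pi}{n}\,e^{-\mathrm{i}\pi/n}$ and $\omega_n+e^{-\mathrm{i}2\theta/n}=2\cos\Pa{\tfrac{\theta-\pi}{n}}e^{-\mathrm{i}(\pi+\theta)/n}$ reconciles this with the third expression in \eqref{eq:mainone}.

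Finally I would verify that the single chord covers the whole curve. As $\sin^2\beta$ increases from $0$ to $1$, the real part of $z$ decreases monotonically while its imaginary part $\sin^2\beta\,\sin\tfrac{2\pi}{n}$ increases (using $\sin\tfrac{2\pi}{n}>0$ for $n\geqslant3$), so $\varphi=\arg z$ rises monotonically from $0$ to $\tfrac{2\pi}{n}$; therefore $\theta=n\varphi$ exhausts the full interval $[0,2\pi)$, and the $n$th power of this one chord recovers every point of the boundary curve. I expect the main (and essentially only) subtlety to be this last step: confirming that $\arg z$ traverses $[0,\tfrac{2\pi}{n}]$ monotonically so that the $n$th-power map ``unwraps'' it bijectively onto $[0,2\pi)$; by contrast, the overlap computation and the projection identity are immediate once the symmetric family is chosen, so the real insight lies in selecting the equally-spaced-longitude configuration on a fixed Bloch-sphere latitude.
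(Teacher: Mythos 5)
Your construction coincides with the paper's own proof: the paper uses the same one-parameter family $\ket{\psi_k(t)}=\sin t\ket{0}+e^{\mathrm{i}2\pi k/n}\cos t\ket{1}$ (your states with $\beta=\tfrac{\pi}{2}-t$), the same collapse of all $n$ overlaps, including the wrap-around, to a single $z$ with $\Delta_n=z^n$, and the same identification of $z^n$ with the curve of Eq.~\eqref{eq:mainone} --- the paper verifies this by solving for $\cos^2 t$ in terms of $\theta$ and substituting into $\abs{\Delta_n}$, which your projection identity $\mathrm{Re}\Pa{z e^{-\mathrm{i}\pi/n}}=\cos\Pa{\tfrac{\pi}{n}}$ obtains somewhat more directly. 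One small caveat: your monotonicity argument (``$\mathrm{Re}\,z$ decreases and $\mathrm{Im}\,z$ increases, hence $\arg z$ increases'') is not valid in general once the chord enters the second quadrant (which happens for $n=3$), though it is harmless here, since continuity of $\arg z$ along the chord together with the endpoint values $0$ and $\tfrac{2\pi}{n}$ already yields surjectivity onto $[0,2\pi)$ after taking $n$th powers, which is all the proposition requires.
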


\begin{proof}
Denote $\omega_n=e^{-\mathrm{i}\frac{2\pi}{n}}$. Consider the
$n$-tuple of qubit pure states $\Psi=(\psi_0,\ldots,\psi_{n-1})$
where $\psi_k\equiv\out{\psi_k}{\psi_k}$ and $\ket{\psi_k(t)}:=\sin
t\ket{0}+\bar{\omega}_n^k\cos t\ket{1}$, where $k=0,1,\cdots,n-1$.
Then
$\Inner{\psi_k}{\psi_{k+1}}=\sin^2t+\bar{\omega}_n\cos^2t=\Inner{\psi_{n-1}}{\psi_0}$
for all $k=0,1,\ldots,n-2$. Therefore,
\begin{eqnarray*}
\Delta_n&=&\Tr{\psi_0\cdots\psi_{n-1}}=\Pa{\sin^2t+\bar\omega_n\cos^2t}^n\\
    &=&\Br{\Pa{\sin^2t+\cos\Pa{\tfrac{2\pi}n}\cos^2t}+\mathrm{i}\Pa{\sin\Pa{\tfrac{2\pi}n}\cos^2t}}^n,
\end{eqnarray*}
This can be expressed as $\Delta_n=\cos^n\Pa{\tfrac\pi n}
\sec^n\Pa{\tfrac{\theta-\pi}n}e^{\mathrm{i}\theta}$ for some angle
$\theta$, determined by
$$
\tan\Pa{\tfrac\theta
n}=\frac{\sin\Pa{\tfrac{2\pi}{n}}}{\cos\Pa{\tfrac{2\pi}{n}}+\tan^2t}.
$$
Indeed, for this tuple $\Psi=\Pa{\psi_0,\ldots,\psi_{n-1}}$, the
Bargmann invariant $\Delta_n(\Psi)$ is of the polar form
$\Delta_n=\abs{\Delta_n}e^{\mathrm{i}\theta}$, where
\begin{eqnarray*}
\abs{\Delta_n}=\Br{\Pa{\sin^2t+\cos\Pa{\tfrac{2\pi}n}\cos^2t}^2+\Pa{\sin\Pa{\tfrac{2\pi}n}\cos^2t}^2}^{\frac
n2},
\end{eqnarray*}
and its phase $\theta$ satisfying $\frac{\sin\Pa{\frac\theta
n}}{\cos\Pa{\frac\theta
n}}=\frac{\sin\Pa{\frac{2\pi}{n}}\cos^2t}{\sin^2t+\cos\Pa{\frac{2\pi}{n}}\cos^2t}$,
i.e.,
\begin{eqnarray*}
\frac{\sin\Pa{\frac\theta n}}{\cos\Pa{\frac\theta n}}=
\frac{2\sin\Pa{\tfrac\pi n}\cos\Pa{\tfrac\pi
n}\cos^2t}{1-2\sin^2\Pa{\tfrac\pi n}\cos^2t}.
\end{eqnarray*}
This leads to
\begin{eqnarray}\label{eq:D3}
\cos^2t =\frac{\sin\Pa{\tfrac\theta n}}{2\sin\Pa{\tfrac\pi
n}\cos\Pa{\tfrac{\theta-\pi}n}},
\end{eqnarray}
where $\sin\Pa{\tfrac\pi n}>0$ and $\cos\Pa{\tfrac{\theta-\pi}n}>0$,
because $0<\tfrac\pi n\leqslant\tfrac\pi3$ and
$\tfrac{\abs{\theta-\pi}}n\leqslant\tfrac\pi n$. By substituting
Eq.~\eqref{eq:D3} into the expression of $\abs{\Delta_n}$, we have
\begin{eqnarray*}
\abs{\Delta_n}&=&\Br{\Pa{\sin^2t+\cos\Pa{\tfrac{2\pi}n}\cos^2t}^2+\Pa{\sin\Pa{\tfrac{2\pi}n}\cos^2t}^2}^{\frac
n2}\\
&=&\Br{\Pa{1-2\sin^2\Pa{\tfrac\pi n}\cos^2t}^2+\Pa{\sin\Pa{\tfrac{2\pi}n}\cos^2t}^2}^{\frac n2}\\
&=&\Br{\frac{\Pa{\sin\Pa{\tfrac{2\pi} n}\cos\Pa{\tfrac\theta
n}}^2+\Pa{\sin\Pa{\tfrac{2\pi} n}\sin\Pa{\tfrac\theta
n}}^2}{\Pa{2\sin\Pa{\tfrac\pi
n}\cos\Pa{\tfrac{\theta-\pi}n}}^2}}^\frac{n}{2}\\
&=&\Pa{\frac{\sin\Pa{\tfrac{2\pi}n}}{2\sin\Pa{\tfrac\pi
n}\cos\Pa{\tfrac{\theta-\pi}n}}}^n =\cos^n\Pa{\tfrac\pi
n}\sec^n\Pa{\tfrac{\theta-\pi}n}.
\end{eqnarray*}
Based on the above reasoning, we conclude that the curves described
by Eq.~\eqref{eq:mainone} are quantum realizable.
\end{proof}
In the proof of Proposition~\ref{prop:2}, the constructed pure
states $\ket{\psi_k(t)}=\sin t\ket{0}+\bar{\omega}_n^k\cos
t\ket{1}$, where $k=0,1,\cdots,n-1$, could be called
\emph{Oszmaniec-Brod-Galv\~{a}o's states} as this family of states
is a variation of the family of states that has already appeared in
\cite{Oszmaniec2024}.

\begin{thrm}
The region $\cR_n$ enclosed by the curve
$r_n(\theta)=\cos^n(\frac\pi n)\sec^n(\frac{\theta-\pi}n)$ is
convex.
\end{thrm}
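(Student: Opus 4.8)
The plan is to treat the boundary $\partial\cR_n$ as the positively oriented Jordan curve $z(\theta)=r_n(\theta)e^{\mathrm{i}\theta}$, $\theta\in[0,2\pi)$, and to show that its tangent direction turns monotonically in a single sense, which is exactly the condition for the enclosed region to be convex. First I would check the geometric preliminaries: since $\tfrac{\abs{\theta-\pi}}{n}\leqslant\tfrac{\pi}{n}\leqslant\tfrac{\pi}{3}<\tfrac{\pi}{2}$, the radius $r_n(\theta)=\cos^n\Pa{\tfrac{\pi}{n}}\sec^n\Pa{\tfrac{\theta-\pi}{n}}$ is finite and strictly positive for every $\theta$. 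Hence $\cR_n$ is star-shaped about the origin, every ray from the origin meets the boundary exactly once, and the curve is simple and traversed counterclockwise. I will then invoke the standard criterion that a piecewise-$C^2$, positively oriented simple closed curve bounds a convex region if and only if its signed curvature is nonnegative on every smooth arc and every corner has nonnegative exterior angle (equivalently, the tangent-angle function is monotone nondecreasing around the loop with total increment $2\pi$).

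For the smooth part I would use the polar curvature formula $\kappa=\bigl(r^2+2(r')^2-rr''\bigr)/(r^2+(r')^2)^{3/2}$. Writing $\phi:=\tfrac{\theta-\pi}{n}$, a one-line differentiation gives $r'=r\tan\phi$ and $r''=r\Pa{\tan^2\phi+\tfrac1n\sec^2\phi}$, after which the numerator collapses to
$$
r^2+2(r')^2-rr''=\tfrac{n-1}{n}\,r^2\sec^2\phi>0 .
$$
Thus $\kappa>0$ on all of $(0,2\pi)$; moreover, since $r^2+(r')^2=r^2\sec^2\phi$, one gets $\kappa\,\mathrm{d}s=\tfrac{n-1}{n}\,\mathrm{d}\theta$, so the smooth arc accumulates total turning $2\pi\tfrac{n-1}{n}$.

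The hard part --- and the step I expect to be the genuine obstacle --- is that $r_n$ is not $2\pi$-periodic, so the parametrization does not close up smoothly but forms a corner at the base point $z(0)=z(2\pi)=1$. I would resolve this by computing the one-sided tangents $z'(0)=-\tan\tfrac{\pi}{n}+\mathrm{i}$ and $z'(2\pi)=\tan\tfrac{\pi}{n}+\mathrm{i}$; these differ by a counterclockwise rotation of $\tfrac{2\pi}{n}$, so the corner has exterior angle $\tfrac{2\pi}{n}>0$ (interior angle $\tfrac{(n-2)\pi}{n}\in(0,\pi)$) and turns in the same sense as the smooth arc. Adding the corner's $\tfrac{2\pi}{n}$ to the arc's $2\pi\tfrac{n-1}{n}$ yields total turning exactly $2\pi$ with every contribution nonnegative, so the tangent angle is monotone and the Jordan curve is convex; therefore $\cR_n$ is convex. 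The essential subtlety is precisely noticing this corner and verifying that its angle has the correct sign --- a naive appeal to the smooth curvature criterion alone would leave a gap --- whereas the curvature computation itself is, thanks to the cancellation above, almost immediate.
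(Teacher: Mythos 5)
Your proof is correct, and it takes a genuinely different route from the paper. The paper proves convexity by an elementary calculus argument: writing the curve parametrically as $(x(\theta),y(\theta))$, it computes
$$
\frac{\dif^2 y}{\dif x^2}=-\tfrac{n-1}{n}\sec^n\Pa{\tfrac{\pi}{n}}\cos^{n+1}\Pa{\tfrac{\theta-\pi}{n}}\csc^3\Pa{\tfrac{\pi+(n-1)\theta}{n}}<0
$$
on $(0,\pi)$, so the upper arc is the graph of a concave function of $x$, and then invokes the symmetry $\theta\mapsto 2\pi-\theta$ across the real axis, so that $\cR_n$ is the region between a concave graph $g$ and its mirror $-g$ over a common interval --- a convex set. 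In that formulation the corner at $z=1$ never needs separate attention, since the intersection-of-hypograph-and-epigraph description absorbs it automatically; by contrast, in your turning-tangent formulation the corner is a genuine issue, and you correctly identified and resolved it: your one-sided tangents $z'(0)=-\tan\tfrac{\pi}{n}+\mathrm{i}$ and $z'(2\pi)=\tan\tfrac{\pi}{n}+\mathrm{i}$ give exterior angle $\tfrac{2\pi}{n}>0$, and your curvature computation ($r'=r\tan\phi$, $r''=r(\tan^2\phi+\tfrac1n\sec^2\phi)$, numerator $\tfrac{n-1}{n}r^2\sec^2\phi$, hence $\kappa\,\dif s=\tfrac{n-1}{n}\,\dif\theta$) checks out, with the pleasing sanity check that $2\pi\tfrac{n-1}{n}+\tfrac{2\pi}{n}=2\pi$. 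What each buys: the paper's argument is more elementary (no piecewise-smooth convexity criterion needed, only one symbolic second derivative plus reflection symmetry), while yours is more structural --- it makes the non-smoothness of the boundary at $z=1$ explicit, exhibits constant turning density $\tfrac{n-1}{n}$ in $\theta$, and would transfer unchanged to curves lacking the real-axis symmetry that the paper's proof depends on. One small remark: your star-shapedness observation guarantees the curve is simple, which is needed for the turning-number criterion and is worth stating exactly as you did.
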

The idea of proof is originated from the convexity argument provided
by Fernandes \emph{et. al.} \cite{Fernandes2024}.

\begin{proof}
Note that the graph of this curve is symmetric with respect to the
real axis. Then we show that the graph of this curve
$r_n(\theta)=\cos^n\Pa{\frac\pi n}\sec^n\Pa{\frac{\theta-\pi}n}$ is
\emph{concave} on the open interval $(0,\pi)$. This implies that the
region below this curve must be a convex region. By the symmetry of
this curve with respect to the real axis, $r_n(\theta)$ is
\emph{convex} on the open interval $(\pi,2\pi)$. Both together form
the region $\cR_n$. Let us rewrite it as the parametric equation in
parameter $\theta$:
\begin{eqnarray*}
\begin{cases}
x(\theta) =\cos^n\Pa{\frac\pi n}\sec^n\Pa{\frac{\theta-\pi}n}\cos\theta,\\
y(\theta) =\cos^n\Pa{\frac\pi
n}\sec^n\Pa{\frac{\theta-\pi}n}\sin\theta.
\end{cases}
\end{eqnarray*}
In fact, for $n\geqslant4$ and $\theta\in(0,\pi)$,
\begin{eqnarray*}
\frac{\dif^2 y}{\dif x^2} = -\tfrac{n-1}n\sec^n\Pa{\tfrac\pi
n}\cos^{n+1}\Pa{\tfrac{\theta-\pi}n}\csc^3\Pa{\tfrac{\pi+(n-1)\theta}n},
\end{eqnarray*}
which is negative because all factors on the right hand side are
positive up to the minus sign. In summary, $\frac{\dif^2 y}{\dif
x^2}<0$ on $(0,\pi)$, implying that the curve
$r_n(\theta)=\cos^n\Pa{\frac\pi n}\sec^n\Pa{\frac{\theta-\pi}n}$ is
concave on $[0,\pi]$. Furthermore, it is convex on $[\pi,2\pi]$ by
the symmetry with respect to the real axis. Therefore the region
$\cR_n$ enclosed by the curve $r_n(\theta)=\cos^n\Pa{\frac\pi
n}\sec^n\Pa{\frac{\theta-\pi}n}$, where $\theta\in[0,2\pi]$, is
convex.
\end{proof}

\section{Discussion and outlook}
\label{sect:evidence}

For Fernandes \emph{et al.}'s conjecture, Fernandes \emph{et al.}
\cite{Fernandes2024} have confirmed that $\cB^\circ_3(d) =
\cB^\circ_3(2)=\cB^\circ_3$ and our Theorem~\ref{th:one} have
already confirmed that $\cB^\circ_4(d) =
\cB^\circ_4(2)=\cB^\circ_4$. The core of their conjecture
essentially asserts that, for any $n$-tuple $(\psi_1,\ldots,\psi_n)$
of qudit pure states in $\complex^d(d\geqslant2)$, there must be an
$n$-tuple $(\tilde\psi_1,\ldots,\tilde\psi_n)$ of qubit pure states
in $\complex^2$ such that
$\Tr{\psi_1\cdots\psi_n}=\Tr{\tilde\psi_1\cdots\tilde\psi_n}$, where
each $\psi_k\equiv\out{\psi_k}{\psi_k}$ and
$\tilde\psi_k\equiv\out{\tilde\psi_k}{\tilde\psi_k}$. Clearly,
constructing such correspondence is highly non-trivial. Building on
this observation, from Eq.~\eqref{eq:1}, we observe that
$\cB^\circ_n(d)$ can be expressed as a union of a family of
elliptical disks (denoted by $\cE_{\psi_1,\ldots,\psi_{n-1}}$). If
we can construct a family of qubit pure states
$(\tilde\psi_1,\ldots,\tilde\psi_{n-1})$ such that
$\cE_{\psi_1,\ldots,\psi_{n-1}}=\cE_{\tilde\psi_1,\ldots,\tilde\psi_{n-1}}$,
then we can demonstrate that
$\cB^\circ_n(d)=\cB^\circ_n(2)=\cB^\circ_n$.

Indeed, from this perspective, mathematical induction could
potentially come into play because the assumption
$\cB^\circ_{n-1}(d)=\cB^\circ_{n-1}(2)=\cB^\circ_{n-1}$ will
guarantee that for any $(n-1)$-tuple $(\psi_1,\ldots,\psi_{n-1})$ of
pure states in $\complex^d$, there must be a $(n-1)$-tuple
$(\tilde\psi_1,\ldots,\tilde\psi_{n-1})$ of qubit pure states in
$\complex^2$ such that
$\Tr{\psi_1\cdots\psi_{n-1}}=\Tr{\tilde\psi_1\cdots\tilde\psi_{n-1}}$.
If, moreover,
$\abs{\prod^{n-2}_{j=1}\Inner{\psi_j}{\psi_{j+1}}}=\abs{\prod^{n-2}_{j=1}\Inner{\tilde\psi_j}{\tilde\psi_{j+1}}}$,
then, together with
$\Tr{\psi_1\cdots\psi_{n-1}}=\Tr{\tilde\psi_1\cdots\tilde\psi_{n-1}}$,
we have
$\cE_{\psi_1,\ldots,\psi_{n-1}}=\cE_{\tilde\psi_1,\ldots,\tilde\psi_{n-1}}$.
This implies that
\begin{eqnarray*}
\cB^\circ_n(d)&=&\cup_{\psi_1,\ldots,\psi_{n-1}}\cE_{\psi_1,\ldots,\psi_{n-1}}\\
&\subseteq&\cup_{\tilde\psi_1,\ldots,\tilde\psi_{n-1}}\cE_{\tilde\psi_1,\ldots,\tilde\psi_{n-1}}=\cB^\circ_n(2).
\end{eqnarray*}
Therefore $\cB^\circ_n(d)=\cB^\circ_n(2)$ follows immediately from
$\cB^\circ_n(2)\subseteq\cB^\circ_n(d)$, a trivial fact. We leave
this as an open problem for future research.

In this study, we have made significant progress in addressing the
theoretical challenge of determining the boundaries of sets of
Bargmann invariants, which are crucial unitary invariants in quantum
information theory. By developing a unified, dimension-independent
formulation, we have successfully characterized the sets of 3rd and
4th Bargmann invariants, thereby confirming a recent conjecture
proposed by Fernandes \emph{et al.} for the 4th-order case. Our
results not only provide a deeper understanding of the geometric and
algebraic structures underlying these invariants but also suggest a
potential extension of our formulation to the general case of
$n$th-order Bargmann invariants.

These findings have some implications for quantum mechanics, as they
elucidate fundamental physical limits and open avenues for exploring
the role of Bargmann invariants in quantum information processing.
In future research, we will further investigate the applications of
Bargmann invariants, particularly in detecting entanglement, as
referenced in \cite{Lin2024}. Additionally, future work could focus
on rigorously proving the conjectured generalization to higher-order
invariants and expanding their applications in quantum computing,
quantum communication, and other related fields. This study thus
marks a significant step toward unifying the theoretical framework
of Bargmann invariants and harnessing their potential for advancing
quantum technologies.

\textit{Acknowledgments.}--- We are deeply grateful for the
significant and crucial comments provided by the anonymous
reviewers, as they have been instrumental in enhancing the quality
of our paper. We also would like to thank Dr. Rafael Wagner for his
feedback. This research is supported by Zhejiang Provincial Natural
Science Foundation of China under Grants No. LZ23A010005, and No.
LZ24A050005, and by NSFC under Grants No.11971140, and No. 12175147.

\emph{Note added:} Recently, we became aware of a related work by Li
and Tan \cite{Li2024}. Their findings complement our results and
provide additional insights about Bargmann invariants.



\begin{widetext}
\newpage
\appendix

\section{The proofs of both identities \eqref{eq:3} and \eqref{eq:4} in the main text}
\label{app:ndproofs}

For any positive integer $n\geqslant3$, it can be written as
$n=2m-1$ or $2m$ for a positive integer $m\geqslant2$. Thus
$\cB^\circ_n(d)=\cB^\circ_{2m-1}(d)$ or $\cB^\circ_{2m}(d)$ for
$m\geqslant2$.

\begin{itemize}
\item {\bf Proof of the identity \eqref{eq:3}}. From
Lemma~\ref{lem:1}, it follows that
\begin{eqnarray*}
\cB^\circ_{2m-1}(d)
&=&\bigcup_{\varphi_1,\ldots,\varphi_{2(m-1)}}\Inner{\varphi_1}{\varphi_2}\Inner{\varphi_2}{\varphi_3}\Inner{\varphi_3}{\varphi_4}\cdots\Inner{\varphi_{2(m-2)}}{\varphi_{2m-3}}\Inner{\varphi_{2m-3}}{\varphi_{2(m-1)}}W_d(\out{\varphi_1}{\varphi_{2(m-1)}})\\
&=&
\bigcup_{\varphi_1,\varphi_{2j}:j=1,2,\ldots,m-1}\Inner{\varphi_1}{\varphi_2}\Pa{\prod^m_{j=3}\bigcup_{\varphi_{2j-3}}\Innerm{\varphi_{2j-3}}{\out{\varphi_{2(j-1)}}{\varphi_{2(j-2)}}}{\varphi_{2j-3}}}W_d(\out{\varphi_1}{\varphi_{2(m-1)}})\\
&=&
\bigcup_{\varphi_1,\varphi_{2j}:j=1,2,\ldots,m-1}\Inner{\varphi_1}{\varphi_2}\Pa{\prod^m_{j=3}W_d(\out{\varphi_{2(j-1)}}{\varphi_{2(j-2)}})}W_d(\out{\varphi_1}{\varphi_{2(m-1)}})
\end{eqnarray*}
By setting
$(\varphi_1,\varphi_2,\varphi_4,\ldots,\varphi_{2(m-1)})=(\psi_1,\psi_2,\ldots,\psi_m)$
and $\psi_{m+1}\equiv\psi_1$, we get that
\begin{eqnarray*}
\cB^\circ_{2m-1}(d)=
\bigcup_{\psi_1,\ldots,\psi_m}\Inner{\psi_1}{\psi_2}\prod^m_{j=2}W_d(\out{\psi_{j+1}}{\psi_j}).
\end{eqnarray*}

\item {\bf Proof of the identity \eqref{eq:4}}. From
Lemma~\ref{lem:1}, it follows that
\begin{eqnarray*}
\cB^\circ_{2m}(d)&=&\bigcup_{\varphi_1,\ldots,\varphi_{2m-1}}
\Inner{\varphi_1}{\varphi_2}\Inner{\varphi_2}{\varphi_3}\Inner{\varphi_3}{\varphi_4}\cdots\Inner{\varphi_{2(m-1)}}{\varphi_{2m-1}}W_d(\out{\varphi_1}{\varphi_{2m-1}})\\
&=&
\bigcup_{\varphi_j:j=1,3,\ldots,2m-1}\Pa{\prod^m_{j=2}\bigcup_{\varphi_{2(j-1)}}\Innerm{\varphi_{2(j-1)}}{\out{\varphi_{2j-1}}{\varphi_{2j-3}}}{\varphi_{2(j-1)}}}W_d(\out{\varphi_1}{\varphi_{2m-1}})\\
&=&
\bigcup_{\varphi_j:j=1,3,\ldots,2m-1}\Pa{\prod^m_{j=2}W_d(\out{\varphi_{2j-1}}{\varphi_{2j-3}})}W_d(\out{\varphi_1}{\varphi_{2m-1}})
\end{eqnarray*}
By setting
$(\varphi_1,\varphi_3,\varphi_5,\ldots,\varphi_{2m-1})=(\psi_1,\psi_2,\ldots,\psi_m)$
and $\psi_{m+1}\equiv\psi_1$, we get that
\begin{eqnarray*}
\cB^\circ_{2m}(d)=
\bigcup_{\psi_1,\ldots,\psi_m}\prod^m_{j=1}W_d(\out{\psi_{j+1}}{\psi_j}).
\end{eqnarray*}
\end{itemize}

\section{The actual largest imaginary parts $\tau_3$ and $\tau_4$}
\label{app:B}

In what follows, let us focus on the qubit system for convenience of
analytical calculations. For any qubit state
$\rho\in\density{\complex^2}$, the set of density matrices acting on
$\complex^2$, we can decompose it as
$\rho=\rho(\bsr)=\frac12(\I_2+\bsr\cdot\boldsymbol{\sigma})$ using
the Bloch representation. Define
$\bsP_n=\rho_1\rho_2\cdots\rho_n=2^{-n}\Pa{p^{(n)}_0\I+\bsp^{(n)}\cdot\boldsymbol{\sigma}}$,
where $p^{(1)}_0=1$ and $\bsp^{(1)}=\bsr_1$. Then, by recursion,
$\bsP_{n+1} = \bsP_n\rho_{n+1}$, implying the following update
rules:
\begin{eqnarray*}
\begin{cases}
p^{(n+1)}_0 &= p^{(n)}_0+\Inner{\bsp^{(n)}}{\bsr_{n+1}},\\
\bsp^{(n+1)} &=
p^{(n)}_0\bsr_{n+1}+\bsp^{(n)}+\mathrm{i}\bsp^{(n)}\times\bsr_{n+1}.
\end{cases}
\end{eqnarray*}
The Bargmann invariant for the tuple $(\rho_1,\ldots,\rho_n)$, where
$\rho_k=\rho(\bsr_k)$ with $\abs{\bsr_k}=1$ for each $k$, is thus
given by $\Delta_n:=\Tr{\bsP_n} = 2^{1-n}p^{(n)}_0=x+\mathrm{i}y$,
where $x,y\in \real$. We now provide explicit expressions for
specific cases $n=3,4,5$.
\begin{enumerate}
\item[(1)] For $n=3$, both the real and imaginary parts of $\Delta_n(\Psi)=\Tr{\rho_1\cdots\rho_n}$ are given by
\begin{eqnarray}\label{eq:A1}
\begin{cases}
x = \frac14\Pa{1+\sum_{1\leqslant i<j\leqslant3}\Inner{\bsr_i}{\bsr_j}},\\
y =
\frac14\Inner{\bsr_1\times\bsr_2}{\bsr_3}=\frac14\det(\bsr_1,\bsr_2,\bsr_3).
\end{cases}
\end{eqnarray}

\item[(2)] For $n=4$, both the real and imaginary parts of $\Delta_n(\Psi)=\Tr{\rho_1\cdots\rho_n}$ are given by
\begin{eqnarray}\label{eq:A2}
\begin{cases}
x = \frac18\Pa{(1+\Inner{\bsr_1}{\bsr_2})(1+\Inner{\bsr_3}{\bsr_4}) - (1-\Inner{\bsr_1}{\bsr_3})(1-\Inner{\bsr_2}{\bsr_4}) + (1+\Inner{\bsr_1}{\bsr_4})(1+\Inner{\bsr_2}{\bsr_3})},\\
y = \frac18\det(\bsr_1+\bsr_2,\bsr_2+\bsr_3,\bsr_3+\bsr_4).
\end{cases}
\end{eqnarray}

\item[(3)] For $n=5$, both the real and imaginary parts of $\Delta_n(\Psi)=\Tr{\rho_1\cdots\rho_n}$ are given by
\begin{eqnarray}\label{eq:A3}
\begin{cases}
x = & \frac1{16}\Big(1+\sum_{1\leqslant i<j\leqslant 5}\Inner{\bsr_i}{\bsr_j}+\Inner{\bsr_1}{\bsr_2}\Inner{\bsr_3}{\bsr_4}-\Inner{\bsr_1}{\bsr_3}\Inner{\bsr_2}{\bsr_4}\\
&+\Inner{\bsr_1}{\bsr_4}\Inner{\bsr_2}{\bsr_3} + (\Inner{\bsr_2}{\bsr_3}+\Inner{\bsr_2}{\bsr_4}+\Inner{\bsr_3}{\bsr_4})\Inner{\bsr_1}{\bsr_5}\\
&+(-\Inner{\bsr_1}{\bsr_3}-\Inner{\bsr_1}{\bsr_4}+\Inner{\bsr_3}{\bsr_4})\Inner{\bsr_2}{\bsr_5}\\
&+(\Inner{\bsr_1}{\bsr_2}-\Inner{\bsr_1}{\bsr_4}-\Inner{\bsr_2}{\bsr_4})\Inner{\bsr_3}{\bsr_5}\\
&+(\Inner{\bsr_1}{\bsr_2}+\Inner{\bsr_1}{\bsr_3}+\Inner{\bsr_2}{\bsr_3})\Inner{\bsr_4}{\bsr_5}\Big),\\
y = & \frac1{16}\Big(\sum_{1\leqslant i<j<k\leqslant5}\Inner{\bsr_i\times\bsr_j}{\bsr_k}+\Inner{\bsr_2}{\bsr_3}\Inner{\bsr_1\times\bsr_4}{\bsr_5}\\
&-\Inner{\bsr_1}{\bsr_3}\Inner{\bsr_2\times\bsr_4}{\bsr_5} +\Inner{\bsr_1}{\bsr_2}\Inner{\bsr_3\times\bsr_4}{\bsr_5}\\
&+\Inner{\bsr_4}{\bsr_5}\Inner{\bsr_1\times\bsr_2}{\bsr_3}\Big).
\end{cases}
\end{eqnarray}
\end{enumerate}

Indeed, for $\Delta_n=x+\mathrm{i}y$, the largest imaginary part of
$\Delta_n$ is $\max\abs{y}$. Next we check that
$\tau_n=\max\abs{y}$. Although the fact that $\tau_3=\frac14$ is the
actual largest imaginary part of all elements in $\cB^\circ_3$ is
obtained already in \cite{Fernandes2024}, here we reprove it for
completeness using a different method.
\begin{prop}\label{prop:1}
For any three vectors $\set{\bsa,\bsb,\bsc}$ in $\real^3$, it holds
that
\begin{equation}\label{eq:C1}
-\abs{\bsa}\abs{\bsb}\abs{\bsc}\leqslant\det(\bsa,\bsb,\bsc)\leqslant
\abs{\bsa}\abs{\bsb}\abs{\bsc}.
\end{equation}
The above inequalities are saturated if and only if the pairwise
orthogonality of $\set{\bsa,\bsb,\bsc}$ is true, i.e.,
$\Inner{\bsa}{\bsb}=\Inner{\bsa}{\bsc}=\Inner{\bsb}{\bsc}=0$.
Moreover, $\det(\bsa,\bsb,\bsc)= \abs{\bsa}\abs{\bsb}\abs{\bsc}$ if
and only if
$\Inner{\bsa}{\bsb}=\Inner{\bsa}{\bsc}=\Inner{\bsb}{\bsc}=0$ and
$\set{\bsa,\bsb,\bsc}$ forms the right-handed system.
\end{prop}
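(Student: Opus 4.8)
The plan is to reduce everything to the scalar triple product and then apply two standard inequalities in sequence. First I would rewrite the determinant as $\det(\bsa,\bsb,\bsc)=\Inner{\bsa\times\bsb}{\bsc}$, the signed volume of the parallelepiped spanned by the three vectors. The magnitude is then controlled in two steps: the Cauchy--Schwarz inequality gives $\abs{\Inner{\bsa\times\bsb}{\bsc}}\leqslant\abs{\bsa\times\bsb}\,\abs{\bsc}$, and Lagrange's identity $\abs{\bsa\times\bsb}^2=\abs{\bsa}^2\abs{\bsb}^2-\Inner{\bsa}{\bsb}^2$ yields $\abs{\bsa\times\bsb}\leqslant\abs{\bsa}\abs{\bsb}$. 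Chaining these produces $\abs{\det(\bsa,\bsb,\bsc)}\leqslant\abs{\bsa}\abs{\bsb}\abs{\bsc}$, which is exactly the two-sided bound in \eqref{eq:C1}.

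For the saturation analysis, I would track the equality conditions through each step. Equality in Cauchy--Schwarz forces $\bsc$ to be parallel to $\bsa\times\bsb$; equality in the Lagrange step forces $\Inner{\bsa}{\bsb}=0$, i.e. $\bsa\perp\bsb$. Since $\bsa\times\bsb$ is orthogonal to both $\bsa$ and $\bsb$, the condition $\bsc\parallel(\bsa\times\bsb)$ immediately gives $\Inner{\bsa}{\bsc}=\Inner{\bsb}{\bsc}=0$. Combined with $\Inner{\bsa}{\bsb}=0$, this establishes full pairwise orthogonality, and conversely pairwise orthogonality makes both inequalities tight. The degenerate cases in which some vector vanishes are handled separately: both sides of \eqref{eq:C1} are then zero and the orthogonality conditions hold vacuously.

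Finally, to distinguish the upper equality $\det(\bsa,\bsb,\bsc)=\abs{\bsa}\abs{\bsb}\abs{\bsc}$ from the lower one, I would examine the sign of $\Inner{\bsa\times\bsb}{\bsc}$ under the orthogonality assumption. Writing $\bsc=\lambda\,(\bsa\times\bsb)$ with $\lambda\in\real$, the triple product equals $\lambda\abs{\bsa\times\bsb}^2$, which is positive precisely when $\lambda>0$, i.e. when $\bsc$ points in the same direction as $\bsa\times\bsb$. This is the statement that $\set{\bsa,\bsb,\bsc}$ forms a right-handed system, giving the claimed characterization of the upper bound.

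The main obstacle I anticipate is bookkeeping rather than conceptual: ensuring that the equality conditions from the two separate inequalities are correctly intersected to yield the simultaneous pairwise-orthogonality statement, and treating the orientation/sign condition cleanly so that right-handedness is tied only to the upper (not the lower) saturation.
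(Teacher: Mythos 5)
Your proof is correct, but it takes a genuinely different route from the paper's. You bound $\det(\bsa,\bsb,\bsc)=\Inner{\bsa\times\bsb}{\bsc}$ algebraically, chaining Cauchy--Schwarz with Lagrange's identity $\abs{\bsa\times\bsb}^2=\abs{\bsa}^2\abs{\bsb}^2-\Inner{\bsa}{\bsb}^2$. The paper instead normalizes to unit vectors and reads $\abs{\det(\bsa,\bsb,\bsc)}$ as six times the volume of the tetrahedron with edges $\bsa,\bsb,\bsc$ at the origin: if the plane through the three endpoints is at distance $h$ from the origin, the endpoints lie on a circle of radius $\sqrt{1-h^2}$, so $V\leqslant\frac{\sqrt{3}}{4}(1-h^2)h\leqslant\frac16$, attained at $h=\frac{1}{\sqrt{3}}$. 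Your algebraic route buys a cleaner equality analysis: Cauchy--Schwarz equality forces $\bsc=\lambda(\bsa\times\bsb)$, Lagrange equality forces $\Inner{\bsa}{\bsb}=0$, and the sign of $\lambda$ separates the upper saturation (right-handed frame) from the lower one --- precisely the bookkeeping the paper's proof largely asserts rather than derives, since the geometric argument only bounds $\abs{\det}$ and states the orthogonality/orientation characterization at the end without tracking when each step is tight. What the paper's geometric route buys is reusability: the same pyramid-volume maximization is the engine behind Lemma~\ref{lem:intineq} (the four-vector bound $\frac{16}{3\sqrt{3}}$ used to compute $\tau_4$), where no comparably simple algebraic identity is available.

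One small inaccuracy to fix: your remark that in the degenerate case the orthogonality conditions "hold vacuously" is not right. If $\bsa=0$ and $\bsb=\bsc=\bse_1$, both sides of \eqref{eq:C1} vanish, so the inequality is saturated, yet $\Inner{\bsb}{\bsc}=1\neq0$. Thus the "only if" direction of the saturation claim fails for degenerate triples as literally stated; this is really a defect in the proposition's phrasing (the paper's proof ignores it too, and the application in the paper only needs unit vectors), and the clean repair is to state the equality characterization for triples of nonzero vectors, where your argument is complete.
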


\begin{proof}
Clearly, $\det(\bsa,\bsb,\bsc)=0$ if and only if
$\set{\bsa,\bsb,\bsc}$ is linearly dependent. Thus we need only
consider the case where $\set{\bsa,\bsb,\bsc}$ is linearly
independent. In this case, there is a triangle pyramid with three
edges $\set{\bsa,\bsb,\bsc}$ and a common vertex at the origin. Note
that
$$
\det(\bsa,\bsb,\bsc)=\abs{\bsa}\abs{\bsc}\abs{\bsc}\det(\bsa',\bsb',\bsc').
$$
It suffices to show that $\abs{\det(\bsa',\bsb',\bsc')}\leqslant1$
for three unit vectors $\set{\bsa',\bsb',\bsc'}$. Thus without loss
of generality, we can assume that
$\abs{\bsa}=\abs{\bsb}=\abs{\bsc}=1$.  Recall that in $\real^3$, a
parallelopiped with three edges $\set{\bsa,\bsb,\bsc}$, its volume
is given by
$\abs{\bsa\cdot(\bsb\times\bsc)}=\abs{\det(\bsa,\bsb,\bsc)}$. The
volume $V(\bsa,\bsb,\bsc)$ of a triangle pyramid is given by
$\abs{\det(\bsa,\bsb,\bsc)}=6V(\bsa,\bsb,\bsc)$. Assume that the
height of this pyramid is $h$, then
$$
V(\bsa,\bsb,\bsc) =\frac13 S_{\max}h =\frac{\sqrt{3}}4(1-h^2)h
$$
where $h\leqslant1$. Thus, the maximal volume is given by
$V_{\max}=\frac16$, which is attained at $h=\frac1{\sqrt{3}}$.
Therefore, the maximal determinant
$\abs{\det(\bsa,\bsb,\bsc)}=6\frac16=1$. For general
$\set{\bsa,\bsb,\bsc}$, it holds that
$$
\abs{\det(\bsa,\bsb,\bsc)}=\abs{\bsa}\abs{\bsb}\abs{\bsc}.
$$
when $\Inner{\bsa}{\bsb}=\Inner{\bsa}{\bsc}=\Inner{\bsb}{\bsc}=0$.
\end{proof}

\begin{prop}
It holds that
\begin{enumerate}
\item[(1)] The real part of 3rd order Bargmann invariant
$x$ ranges over $[-\frac18,1]$.
\item[(2)]  The imaginary part of 3rd order Bargmann invariant
$y$ ranges over $[-\frac14,\frac14]$.
\end{enumerate}
\end{prop}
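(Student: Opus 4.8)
The plan is to work entirely within the qubit Bloch picture of Appendix~\ref{app:B}, which is legitimate because Theorem~\ref{th:one} shows $\cB^\circ_3(d)=\cB^\circ_3(2)$, so every attainable value of $\Delta_3$ is already realized by a triple of pure qubit states $\rho_k=\rho(\bsr_k)$ with $\abs{\bsr_k}=1$. For such a triple the real and imaginary parts are given explicitly by Eq.~\eqref{eq:A1}, namely $x=\frac14\Pa{1+\Inner{\bsr_1}{\bsr_2}+\Inner{\bsr_1}{\bsr_3}+\Inner{\bsr_2}{\bsr_3}}$ and $y=\frac14\det(\bsr_1,\bsr_2,\bsr_3)$. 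Thus the two claims reduce to finding the ranges of these two functions as the $\bsr_k$ vary over the unit sphere.

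Part (2) I would settle immediately via Proposition~\ref{prop:1}. Since $\abs{\bsr_1}=\abs{\bsr_2}=\abs{\bsr_3}=1$, that proposition gives $-1\leqslant\det(\bsr_1,\bsr_2,\bsr_3)\leqslant1$, whence $y\in[-\frac14,\frac14]$; moreover it tells us both endpoints are saturated precisely by orthonormal triples (right-handed for $+\frac14$, left-handed for $-\frac14$), so the extreme values are attained.

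For part (1) I would introduce $S:=\Inner{\bsr_1}{\bsr_2}+\Inner{\bsr_1}{\bsr_3}+\Inner{\bsr_2}{\bsr_3}$ and exploit the elementary identity $\Abs{\bsr_1+\bsr_2+\bsr_3}^2=\abs{\bsr_1}^2+\abs{\bsr_2}^2+\abs{\bsr_3}^2+2S=3+2S$. Nonnegativity of the left side forces $S\geqslant-\frac32$, while the triangle inequality $\Abs{\bsr_1+\bsr_2+\bsr_3}\leqslant3$ forces $S\leqslant3$. Substituting into $x=\frac14(1+S)$ yields exactly $x\in[-\frac18,1]$. The upper endpoint $x=1$ is attained when the three Bloch vectors coincide (equal pure states, $\Delta_3=1$), and the lower endpoint $x=-\frac18$ is attained when $\bsr_1+\bsr_2+\bsr_3=0$, i.e.\ three coplanar unit vectors at mutual angles $\frac{2\pi}3$.

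Finally, to conclude that the full closed intervals are covered rather than just their endpoints, I would note that $x$ and $y$ are continuous functions on the connected domain of unit-vector triples; by the intermediate value theorem each image is a closed interval, and since the endpoints above are realized, these intervals are exactly $[-\frac18,1]$ and $[-\frac14,\frac14]$. The only genuinely nontrivial step is the lower bound for the real part: recognizing that completing the square through $\Abs{\bsr_1+\bsr_2+\bsr_3}^2$ is what pins $S\geqslant-\frac32$, and verifying that the equilateral (sum-zero) configuration realizes $x=-\frac18$. Everything else follows directly from Eq.~\eqref{eq:A1} and Proposition~\ref{prop:1}.
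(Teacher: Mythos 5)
Your proof is correct, but it takes a genuinely different and more elementary route than the paper's. For the real part, the paper parametrizes the pairwise inner products by angles, argues that $f(\alpha,\beta,\gamma)=\cos\alpha+\cos\beta+\cos\gamma$ has no interior stationary points, reduces to the linearly dependent locus, and solves $\nabla\tilde f=0$ to find $f_{\min}=-\frac32$; you get the same bound in one line from the identity $\Abs{\bsr_1+\bsr_2+\bsr_3}^2=3+2S\geqslant0$ together with $S\leqslant 3$, plus explicit saturating configurations. For the imaginary part, the paper runs a Lagrange-multiplier optimization of $G(t_1,t_2,t_3)=2t_1t_2t_3-t_1^2-t_2^2-t_3^2$ subject to $t_1+t_2+t_3=4x-1$, which yields the full joint constraint $x^2+y^2\leqslant\Pa{\frac{2x+1}{3}}^3$ and reads off $\max\abs{y}=\frac14$ at $x=\frac14$; you instead apply Proposition~\ref{prop:1} directly (unit vectors give $\abs{\det(\bsr_1,\bsr_2,\bsr_3)}\leqslant1$, saturated exactly by orthonormal triples), which is precisely how the paper itself handles the analogous $n=4$ bound via Lemma~\ref{lem:intineq}, so your shortcut is very much in the paper's spirit even though the paper does not use it here. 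What each buys: your argument is shorter and self-contained for the stated proposition, and your continuity-plus-IVT remark cleanly justifies that the full intervals (not just endpoints) are attained, a point the paper leaves implicit; the paper's heavier computation buys strictly more, namely the boundary curve $x^2+y^2=\Pa{\frac{2x+1}{3}}^3$ of $\cB^\circ_3$, i.e., the joint region of $(x,y)$, recovering the Fernandes \emph{et al.} boundary rather than only the coordinate ranges. Your appeal to Theorem~\ref{th:one} (or the known $n=3$ result of Fernandes \emph{et al.}) to reduce to qubits is legitimate and non-circular, since the appendix works in the qubit Bloch picture anyway.
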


\begin{proof}
Note that
$\Inner{\bsr_1}{\bsr_2}+\Inner{\bsr_2}{\bsr_3}+\Inner{\bsr_1}{\bsr_3}=4x-1$
and $\det(\bsr_1,\bsr_2,\bsr_3)=4y$ by Eq.~\eqref{eq:A1}. We also
let $\bsR=(\bsr_1,\bsr_2,\bsr_3)$. Then
\begin{eqnarray*}
\bsR^\t\bsR= \Pa{\begin{array}{ccc}
                   1 & \Inner{\bsr_1}{\bsr_2} & \Inner{\bsr_1}{\bsr_3} \\
                   \Inner{\bsr_1}{\bsr_2} & 1 & \Inner{\bsr_2}{\bsr_3} \\
                   \Inner{\bsr_1}{\bsr_3} & \Inner{\bsr_2}{\bsr_3} & 1
                 \end{array}
}.
\end{eqnarray*}
Note that
$\Inner{\bsr_1}{\bsr_2}=\cos\gamma,\Inner{\bsr_1}{\bsr_3}=\cos\beta$,
and $\Inner{\bsr_2}{\bsr_3}=\cos\alpha$, where
$\alpha,\beta,\gamma\in[0,\pi]$. Based on this, we get that
\begin{eqnarray*}
4x -1&=& \cos\alpha+\cos\beta+\cos\gamma=:f(\alpha,\beta,\gamma),\\
16y^2 -1&=& 2\cos\alpha \cos\beta
\cos\gamma-\cos^2\alpha-\cos^2\beta-\cos^2\gamma=:g(\alpha,\beta,\gamma).
\end{eqnarray*}
\item(1) Note that $[f_{\min},f_{\max}]=\Br{-\frac32,3}$. Indeed,
\begin{itemize}
\item $\set{\bsr_1,\bsr_2,\bsr_3}$ is linearly dependent if and only
if either one of the following statements holds true:
$\alpha=\beta+\gamma$ or $\beta=\alpha+\gamma$ or
$\gamma=\alpha+\beta$ or $\alpha+\beta+\gamma=2\pi$, where
$(\alpha,\beta,\gamma)\in[0,\pi]^3$.
\item $\set{\bsr_1,\bsr_2,\bsr_3}$ is linearly independent if and only
if the following statements hold true:
\begin{eqnarray*}
\alpha,\beta,\gamma\in(0,\pi):
\alpha<\beta+\gamma,\beta<\alpha+\gamma,\gamma<\alpha+\beta,\alpha+\beta+\gamma<2\pi
.\end{eqnarray*}
\end{itemize}
The gradient of $f$ is given by $\nabla f=(\partial_\alpha
f,\partial_\beta f,\partial_\gamma f) =
-(\sin\alpha,\sin\beta,\sin\gamma)$. Clearly $f$ does not have
stationary points in $(0,\pi)^3$. We claim that
\begin{center}
$f$ attains extremal values only when $\set{\bsr_1,\bsr_2,\bsr_3}$
is linearly dependent.
\end{center}
Without loss of generality, we assume that $\gamma=\alpha+\beta$.
Thus, $f(\alpha,\beta,\gamma) =
\cos\alpha+\cos\beta+\cos(\alpha+\beta)$. Based on this observation,
$f_{\max}=3$ if $\alpha=\beta=\gamma=0$. Let $\tilde
f(\alpha,\beta)=\cos\alpha+\cos\beta+\cos(\alpha+\beta)$. Its
gradient is given by
$$
\nabla \tilde f=-(\sin\alpha+\sin(\alpha+\beta),
\sin\beta+\sin(\alpha+\beta))=0,
$$
implying that $\sin(\alpha+\beta)=-\sin\alpha=-\sin\beta$. Solving
this equation, we get that $\alpha=\beta=0$ or
$\alpha=\beta=\frac23\pi$. Therefore, $f_{\min}=-\frac32$.
\item (2 and 3) In order to find the maximal value of $y$, this is equivalent to find the maximal value of
$16y^2-1=g(\alpha,\beta,\gamma)$. Instead of finding the maximal
value of $g(\alpha,\beta,\gamma)$, we rewrite
$g(\alpha,\beta,\gamma)$ as
$G(t_1,t_2,t_3):=2t_1t_2t_3-t^2_1-t^2_2-t^2_3$, where
$(t_1,t_2,t_3)=(\cos\alpha,\cos\beta,\cos\gamma)$. Let us optimize
this objective function $G(t_1,t_2,t_3)$ subject to the constraint
$t_1+t_2+t_3=4x-1$. Construct Lagrange multiplier
$$
L(t_1,t_2,t_3,\lambda):=2t_1t_2t_3-t^2_1-t^2_2-t^2_3 +
\lambda(t_1+t_2+t_3-4x+1).
$$
Its gradient is
$$
\nabla L = (\lambda -2 t_1+2 t_2 t_3, \lambda -2 t_2+2 t_1 t_3,
\lambda -2 t_3+2 t_1 t_2,t_1+t_2+t_3-4x+1)=0,
$$
which gives rise to the following solutions:
\begin{eqnarray*}
(t_1,t_2,t_3,\lambda)&=&\Pa{\frac{4x-1}3,\frac{4x-1}3,\frac{4x-1}3,-\frac89(4 x^2-5 x+1)}\\
(t_1,t_2,t_3,\lambda)&=&(-1,-1,4x+1,8x),\\
(t_1,t_2,t_3,\lambda)&=&(-1,4x+1,-1,8x),\\
(t_1,t_2,t_3,\lambda)&=&(4x+1,-1,-1,8x).
\end{eqnarray*}
After calculations, we get that $G$ attains its maximum at
$(t_1,t_2,t_3)=(\frac{4x-1}3,\frac{4x-1}3,\frac{4x-1}3)$, i.e.,
$$
G_{\max}=1+2\Pa{\frac{4x-1}3}^3-3\Pa{\frac{4x-1}3}^2
$$
implies that
$$
16y^2-1\leqslant 2\Pa{\frac{4x-1}3}^3-3\Pa{\frac{4x-1}3}^2
$$
which can be rewritten as
$$
x^2+y^2\leqslant
1+2\Pa{\frac{4x-1}3}^3-3\Pa{\frac{4x-1}3}^2+16x^2=\Pa{\frac{2x+1}3}^3.
$$
The equation of the boundary curve is $x^2+y^2=\Pa{\frac{2x+1}3}^3$.
In summary,
$\max\abs{y}=\frac14\max_{\bsr_k:k=1,2,3}\det(\bsr_1,\bsr_2,\bsr_4)=\frac14=\tau_3$
when $x=\frac14$.
\end{proof}

In what follows, we calculate the largest imaginary part for
$\cB^\circ_4$. To that end, we establish the following interesting
inequality:
\begin{lem}\label{lem:intineq}
For any four unit vectors $\set{\bsr_k:k=1,2,3,4}$ in $\real^3$, it
holds that
\begin{eqnarray}\label{eq:C4}
-\frac{16}{3\sqrt{3}}\leqslant\det(\bsr_1+\bsr_2,\bsr_2+\bsr_3,\bsr_3+\bsr_4)\leqslant
\frac{16}{3\sqrt{3}}.
\end{eqnarray}
The upper bound can be attained, for instance, at the following
vectors:
$$
\bsr_1=\frac1{\sqrt{3}}\Pa{\sqrt{2},0,1},\bsr_2=\frac1{\sqrt{3}}\Pa{0,\sqrt{2},1},\bsr_3=\frac1{\sqrt{3}}\Pa{-\sqrt{2},0,1},\bsr_4=\frac1{\sqrt{3}}\Pa{0,-\sqrt{2},1}.
$$
\end{lem}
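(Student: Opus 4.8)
The plan is to establish the upper bound $D := \det(\bsr_1+\bsr_2,\bsr_2+\bsr_3,\bsr_3+\bsr_4)\leqslant \tfrac{16}{3\sqrt{3}}$; the matching lower bound is then automatic from the antisymmetry under the reversal $(\bsr_1,\bsr_2,\bsr_3,\bsr_4)\mapsto(\bsr_4,\bsr_3,\bsr_2,\bsr_1)$, which swaps the first and third columns and hence sends $D\mapsto -D$. Since $D$ is continuous on the compact set $(S^2)^4$ of quadruples of unit vectors, its maximum and minimum are attained. As a warm-up observation I would expand $D$ by multilinearity and the scalar triple-product identity, discarding the determinants with a repeated column, to get $D=\sum_{1\leqslant i<j<k\leqslant 4}\det(\bsr_i,\bsr_j,\bsr_k)$; this makes the trivial estimate $|D|\leqslant 4$ transparent (four signed volumes each bounded by $1$ via Proposition~\ref{prop:1}) and shows why a naive bound is too weak, so a genuine critical-point analysis is needed.

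First I would write $D=\bsa\cdot(\bsb\times\bsc)$ with $\bsa=\bsr_1+\bsr_2$, $\bsb=\bsr_2+\bsr_3$, $\bsc=\bsr_3+\bsr_4$, and note that $D$ is \emph{separately affine-linear} in each $\bsr_k$ (the potential quadratic terms are repeated-column determinants and vanish). Hence at any extremum on $(S^2)^4$ the Lagrange conditions $\nabla_{\bsr_k}D=\mu_k\bsr_k$ hold, with gradients $\nabla_{\bsr_1}D=\bsb\times\bsc$, $\nabla_{\bsr_4}D=\bsa\times\bsb$, $\nabla_{\bsr_2}D=\bsc\times(\bsr_1-\bsr_3)$, and $\nabla_{\bsr_3}D=\bsa\times(\bsr_4-\bsr_2)$. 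Because each gradient is a cross product, it is orthogonal to the two vectors forming it, and reading off $\bsr_k\parallel\nabla_{\bsr_k}D$ yields a system of orthogonality relations among the pairwise inner products $c_{ij}=\Inner{\bsr_i}{\bsr_j}$ (for instance $\bsr_1\perp\bsb$ and $\bsr_1\perp\bsc$ give $c_{12}+c_{13}=0$ and $c_{13}+c_{14}=0$).

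Solving this linear system collapses everything to a single parameter $s$, namely $c_{12}=c_{23}=c_{34}=c_{41}=s$ and $c_{13}=c_{24}=-s$. The corresponding $4\times4$ Gram matrix $G$ is circulant with first row $(1,s,-s,s)$, whose eigenvalues are $1+s$ (with multiplicity three) and $1-3s$. Realizability of four vectors in $\real^3$ forces $G$ to be positive semidefinite of rank at most three, i.e. $\det G=(1+s)^3(1-3s)=0$ with $s\in[-1,\tfrac13]$. The degenerate root $s=-1$ gives a rank-one Gram matrix and hence $D=0$, while the nondegenerate root $s=\tfrac13$ gives the rank-three symmetric ``umbrella'' configuration exhibited in the statement; evaluating $D$ there gives exactly $\tfrac{16}{3\sqrt{3}}$. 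The remaining critical points are those where some gradient cross product vanishes (e.g. $\bsb\parallel\bsc$), and these all force $D=0$. Comparing values identifies $\tfrac{16}{3\sqrt{3}}$ as the global maximum, which with the reversal symmetry completes the proof.

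The hard part will be the middle step: computing the four gradients correctly — especially the coupled ones $\nabla_{\bsr_2}D$ and $\nabla_{\bsr_3}D$ for the inner vectors — and verifying that the resulting orthogonality relations really do reduce to the single-parameter family $c_{12}=c_{23}=c_{34}=c_{41}=s$, $c_{13}=c_{24}=-s$, rather than to some higher-dimensional variety of stationary configurations. A secondary subtlety is the global-optimality bookkeeping: one must confirm that every degenerate or boundary-type critical point yields $D=0<\tfrac{16}{3\sqrt{3}}$, so that the nondegenerate value is genuinely the global maximum and not merely a local one. Once the single-parameter reduction is secured, solving $(1+s)^3(1-3s)=0$ and evaluating $D$ at $s=\tfrac13$ is routine and is consistent with the explicit vectors given in the statement.
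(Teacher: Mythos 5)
Your proof strategy is correct, but it takes a genuinely different route from the paper's. You share the opening moves (expanding $D=\det(\bsr_1+\bsr_2,\bsr_2+\bsr_3,\bsr_3+\bsr_4)=\sum_{1\leqslant i<j<k\leqslant4}\det(\bsr_i,\bsr_j,\bsr_k)$ and reducing to the upper bound by an antisymmetry), but from there the paper argues geometrically: at a maximizer all four signed volumes are nonnegative, a case analysis rules out two of the determinants equalling $1$ simultaneously, and the key identity $D=12V(\bsr_1,\bsr_2,\bsr_3,\bsr_4)$ (twice the sum of the four tetrahedral volumes, i.e.\ the volume of a quadrilateral pyramid with unit lateral edges) reduces everything to maximizing $\tfrac{2}{3}(1-h^2)h$ over the height $h$, giving $V_{\max}=\tfrac{4}{9\sqrt{3}}$ at $h=\tfrac{1}{\sqrt{3}}$. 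You instead run first-order stationarity on $(S^2)^4$: since $D$ is affine in each $\bsr_k$ separately, the Lagrange conditions force each $\bsr_k$ to be parallel to a cross product, and I have checked that the resulting orthogonality relations do collapse exactly to the one-parameter family $c_{12}=c_{23}=c_{34}=c_{14}=s$, $c_{13}=c_{24}=-s$ with $c_{ij}=\Inner{\bsr_i}{\bsr_j}$, with no larger stationary variety; the circulant Gram matrix then has spectrum $1+s$ (multiplicity three) and $1-3s$, and singularity (forced by four vectors living in $\real^3$) pins $s\in\set{-1,\tfrac13}$. Your treatment of the degenerate critical points is also sound: if any one gradient vanishes, one column of the determinant lies in the span of the other two, so $D=0$ there. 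What each approach buys: yours is more systematic and easier to make fully rigorous---the paper's case (c) tacitly assumes the extremal configuration is the symmetric one and its base quadrilateral planar, whereas your classification of critical values $\set{0,\pm\tfrac{16}{3\sqrt{3}}}$ together with compactness of $(S^2)^4$ closes the global-optimality argument cleanly; the paper's proof, in exchange, is computationally lighter and exposes the geometry of the extremizer (a right pyramid over a square inscribed in a circle of radius $\sqrt{1-h^2}$).

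Two harmless slips to repair when you write this up. First, the gradient in the third variable is $\nabla_{\bsr_3}D=\bsa\times(\bsr_2-\bsr_4)$, opposite in sign to what you wrote; this is irrelevant to the argument since the sign is absorbed into the multiplier $\mu_3$. Second, the Gram matrix at $s=\tfrac13$ determines the quadruple only up to $O(3)$, hence determines $D$ only up to sign; either fix a right-handed realization (the umbrella vectors in the statement do the job) or note that $D^2$ equals the Gram determinant of the three columns $\bsa,\bsb,\bsc$, which at $s=\tfrac13$ evaluates to $\tfrac{256}{27}$, so $\abs{D}=\tfrac{16}{3\sqrt{3}}$; since the maximum is positive (it is attained with the stated vectors), this suffices.
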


\begin{proof}
Due to the fact that
$$
\det(-\bsr_1-\bsr_2,-\bsr_2-\bsr_3,-\bsr_3-\bsr_4)=-\det(\bsr_1+\bsr_2,\bsr_2+\bsr_3,\bsr_3+\bsr_4)
$$
it suffices to consider the maximal value of the above determinant.
In fact, we see that
\begin{eqnarray*}
\det(\bsr_1+\bsr_2,\bsr_2+\bsr_3,\bsr_3+\bsr_4)&=&\det(\bsr_1,\bsr_2,\bsr_3)+\det(\bsr_1,\bsr_2,\bsr_4)+\det(\bsr_1,\bsr_3,\bsr_4)+\det(\bsr_2,\bsr_3,\bsr_4)\\
&=&(\bsr_1\times\bsr_2)\cdot\bsr_3+(\bsr_1\times\bsr_2)\cdot\bsr_4+\bsr_1\cdot(\bsr_3\times\bsr_4)+\bsr_2\cdot(\bsr_3\times\bsr_4).
\end{eqnarray*}
In order to get larger values of the above determinant, firstly,
each term on the right hand side must be nonnegative, i.e.,
$$
(\bsr_1\times\bsr_2)\cdot\bsr_3\geqslant0,(\bsr_1\times\bsr_2)\cdot\bsr_4\geqslant0,\bsr_1\cdot(\bsr_3\times\bsr_4)\geqslant0,\bsr_2\cdot(\bsr_3\times\bsr_4)\geqslant0.
$$
Denote by $H_{ij}$ the hyperplane determined by
$\bsr_i\times\bsr_j$, where $(i,j)\in\set{(1,2),(3,4)}$. Thus,
$\bsr_3,\bsr_4\in H^+_{12}$ and $\bsr_1,\bsr_2\in H^+_{34}$. From
this, we see that all four vectors $\bsr_1,\bsr_2,\bsr_3,\bsr_4$ are
in $H^+_{12}\cap H^+_{34}$. Furthermore, all four vectors are in the
convex cone whose vertex is just origin.
\begin{enumerate}
\item[(a)] Note that $\det(\bsr_i,\bsr_j,\bsr_k)\leqslant1$ by Proposition~\ref{prop:1}. By symmetry,
without loss of generality, we assume that
$\det(\bsr_1,\bsr_2,\bsr_3)=1$. This amounts to saying that
$\set{\bsr_1,\bsr_2,\bsr_3}$ is an orthonormal basis in a
right-handed system. We will see that the remaining determinant
would not arrive at one at the same time. Indeed, by symmetry, we
assume that $\det(\bsr_1,\bsr_2,\bsr_4)=1$, which holds true if and
only if $\set{\bsr_1,\bsr_2,\bsr_4}$ is also an orthonormal in a
right-handed system. This implies that $\bsr_4$ is orthogonal to the
plane spanned by $\bsr_1,\bsr_2$. Furthermore, $\bsr_4\propto\bsr_3$
because $\set{\bsr_1,\bsr_2,\bsr_3}$ is an orthonormal basis. So,
$\bsr_3=\pm\bsr_4$. But,
$\det(\bsr_1,\bsr_2,\bsr_3)=\det(\bsr_1,\bsr_2,\bsr_4)=1$, and we
see that $\bsr_3=\bsr_4$. In this case, we get that
\begin{eqnarray*}
\det(\bsr_1,\bsr_2,\bsr_3)+\det(\bsr_1,\bsr_2,\bsr_4)+\det(\bsr_1,\bsr_3,\bsr_4)+\det(\bsr_2,\bsr_3,\bsr_4)=2.
\end{eqnarray*}
From the above reasoning, we see that if there are two determinants
arriving at one, then the other two determinants must be vanished.
\item[(b)] Suppose there is only one determinant arriving at one, say,
$\det(\bsr_1,\bsr_2,\bsr_3)=1$, but the other three determinant do
not take one. Now, $\set{\bsr_1,\bsr_2,\bsr_3}$ form a right-handed
system, then without loss of generality, we assume that
$\set{\bsr_1,\bsr_2,\bsr_3}$ is a computational basis
$\set{\bse_1,\bse_2,\bse_3}$ and $\bsr_4$ varies arbitrarily. Then,
\begin{eqnarray*}
&&\det(\bse_1,\bse_2,\bse_3)+\det(\bse_1,\bse_2,\bsr_4)+\det(\bse_1,\bse_3,\bsr_4)+\det(\bse_2,\bse_3,\bsr_4)\\
&&=1+r_{41}-r_{42}+r_{43},
\end{eqnarray*}
where $\bsr_4=(r_{41},r_{42},r_{43})$. Now, optimize the objection
function of arguments $\bsr_4$ with the constraints
$\abs{\bsr_4}=1$. Construct Lagrange multiplier function
$$
L(\bsr_4,\lambda)=1+r_{41}-r_{42}+r_{43}+\lambda\Pa{\sum^3_{k=1}r^2_{4k}-1}.
$$
Then, its gradient is given by
$$
\nabla L=\Pa{2\lambda r_{41}+1,2\lambda r_{42}-1,2\lambda
r_{43}+1,\sum^3_{k=1}r^2_{4k}-1},
$$
which is vanished if and only if
$(\bsr_4,\lambda)=\pm\Pa{\frac1{\sqrt{3}},-\frac1{\sqrt{3}},\frac1{\sqrt{3}},-\frac{\sqrt{3}}2}$.
Note that
$(\bsr_1\times\bsr_2)\cdot\bsr_4=\bse_3\cdot\bsr_4\geqslant0$, which
means $r_{43}\geqslant0$. Therefore,
$(\bsr_4,\lambda)=\Pa{\frac1{\sqrt{3}},-\frac1{\sqrt{3}},\frac1{\sqrt{3}},-\frac{\sqrt{3}}2}$
implies that the sum of four determinants can arrive at a larger
value:
\begin{eqnarray*}
\det(\bse_1,\bse_2,\bse_3)+\det(\bse_1,\bse_2,\bsr_4)+\det(\bse_1,\bse_3,\bsr_4)+\det(\bse_2,\bse_3,\bsr_4)=1+\sqrt{3}>2.
\end{eqnarray*}
where $\det(\bsr_1,\bsr_2,\bsr_3)=1$ and
$\det(\bsr_1,\bsr_2,\bsr_4)=\det(\bsr_1,\bsr_3,\bsr_4)=\det(\bsr_2,\bsr_3,\bsr_4)=\frac1{\sqrt{3}}$.
\item[(c)] If there is no determinant arriving at one, then from the
above reasoning of symmetric argument, we can see that the sum of
four determinants will get larger in value when
$\det(\bsr_1,\bsr_2,\bsr_3)=\det(\bsr_1,\bsr_2,\bsr_4)=\det(\bsr_1,\bsr_3,\bsr_4)=\det(\bsr_2,\bsr_3,\bsr_4)$
get larger in value. Denote $V(\bsr_i,\bsr_j,\bsr_k)$ as the volume
of the triangle pyramid with three edges $(\bsr_i,\bsr_j,\bsr_k)$ at
a common vertex origin, where $1\leqslant i<j<k\leqslant 4$. We also
let $V(\bsr_1,\bsr_2,\bsr_3,\bsr_4)$ be the volume of the
quadrilateral pyramid with four edges
$\set{\bsr_1,\bsr_2,\bsr_3,\bsr_4}$. We get that
\begin{eqnarray*}
\det(\bsr_1+\bsr_2,\bsr_2+\bsr_3,\bsr_3+\bsr_4)= 6\sum_{1\leqslant
i<j<k\leqslant
4}V(\bsr_i,\bsr_j,\bsr_k)=12V(\bsr_1,\bsr_2,\bsr_3,\bsr_4),
\end{eqnarray*}
where
$V(\bsr_1,\bsr_2,\bsr_3,\bsr_4)=V(\bsr_1,\bsr_2,\bsr_3)+V(\bsr_1,\bsr_3,\bsr_4)=V(\bsr_2,\bsr_3,\bsr_4)+V(\bsr_1,\bsr_2,\bsr_4)$.
In what follows, we find the maximal volume $V_{\max}$ of
$V(\bsr_1,\bsr_2,\bsr_3,\bsr_4)$. For the fixed height $h\in(0,1)$
of the quadrilateral pyramid with four edges
$\set{\bsr_1,\bsr_2,\bsr_3,\bsr_4}$, when the base area achieves its
maximum, then the volume attains its maximum, thus at the height
$h$, we get that the local maximal volume is given by $\frac13 \cdot
2(1-h^2)\cdot h$ for $h\in(0,1)$. Now, the function
$\frac23(1-h^2)h$, achieves its maximum $\frac4{9\sqrt{3}}$ at
$h=\frac1{\sqrt{3}}$. Therefore, $V_{\max}=\frac4{9\sqrt{3}}$, and
thus
\begin{eqnarray*}
\det(\bsr_1+\bsr_2,\bsr_2+\bsr_3,\bsr_3+\bsr_4)\leqslant
\tfrac{16}{3\sqrt{3}}.
\end{eqnarray*}
\end{enumerate}
Attainability of the upper bound is apparent for
$$
\bsr_1=\Pa{\sqrt{\tfrac23},0,\sqrt{\tfrac13}},\bsr_2=\Pa{0,\sqrt{\tfrac23},\sqrt{\tfrac13}},\bsr_3=\Pa{-\sqrt{\tfrac23},0,\sqrt{\tfrac13}},\bsr_4=\Pa{0,-\sqrt{\tfrac23},\sqrt{\tfrac13}}.
$$
This completes the proof.
\end{proof}

\begin{prop}
It holds that
\begin{enumerate}
\item[(1)] The real part of 4th order Bargmann invariant
$x=\mathrm{Re}\Tr{\psi_1\psi_2\psi_3\psi_4}$ ranges over
$[-\frac14,1]$.
\item[(2)]  The imaginary part of 4th order Bargmann invariant
$y=\mathrm{Im}\Tr{\psi_1\psi_2\psi_3\psi_4}$ ranges over
$[-\frac2{3\sqrt{3}},\frac2{3\sqrt{3}}]$.
\end{enumerate}
\end{prop}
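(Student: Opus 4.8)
The plan is to handle the two coordinates separately, working throughout with qubit Bloch vectors $\bsr_1,\bsr_2,\bsr_3,\bsr_4$ (unit vectors in $\real^3$), which is legitimate because Theorem~\ref{th:one} gives $\cB^\circ_4(d)=\cB^\circ_4(2)$, and to exploit the explicit formulas in Eq.~\eqref{eq:A2}. For each of $x$ and $y$ I would first establish the claimed extreme values and then observe that every intermediate value is attained: the pure-state domain is connected and $x,y$ are continuous, so each image is an interval, and exhibiting configurations realizing the endpoints pins the interval down exactly.

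Part (2) is essentially immediate. Since $y=\tfrac18\det(\bsr_1+\bsr_2,\bsr_2+\bsr_3,\bsr_3+\bsr_4)$ by Eq.~\eqref{eq:A2}, Lemma~\ref{lem:intineq} yields $\abs{y}\leqslant\tfrac18\cdot\tfrac{16}{3\sqrt3}=\tfrac2{3\sqrt3}$, and the explicit vectors displayed in that lemma, together with their negations, realize both endpoints.

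The real part in Part (1) is where the work lies. The bound $x\leqslant1$ is trivial from $\abs{\Delta_4}\leqslant1$ and is saturated at $\psi_1=\psi_2=\psi_3=\psi_4$. For the lower bound I would rewrite Eq.~\eqref{eq:A2} in the equivalent geometric form (writing $a_{ij}=\Inner{\bsr_i}{\bsr_j}$ and using $(\bsr_1\times\bsr_2)\cdot(\bsr_3\times\bsr_4)=a_{13}a_{24}-a_{14}a_{23}$)
$$
8x=(1+a_{12})(1+a_{34})+(\bsr_1+\bsr_2)\cdot(\bsr_3+\bsr_4)-(\bsr_1\times\bsr_2)\cdot(\bsr_3\times\bsr_4).
$$
Applying Cauchy--Schwarz to the two remaining terms, together with $\abs{\bsr_i+\bsr_j}^2=2(1+a_{ij})$ and $\abs{\bsr_i\times\bsr_j}=\sqrt{(1-a_{ij})(1+a_{ij})}$, reduces the estimate to the single inequality $g(u,v)\geqslant-2$, where $u=1+a_{12}$, $v=1+a_{34}$ and $g(u,v)=uv-2\sqrt{uv}-\sqrt{u(2-u)v(2-v)}$.

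The decisive simplification is the half-angle substitution $u=2\cos^2\phi$, $v=2\cos^2\psi$ with $\phi,\psi\in[0,\tfrac\pi2]$, which collapses $g$ to the compact form $g=4\cos\phi\cos\psi\,[\cos(\phi+\psi)-1]$. Since $g\leqslant0$, the goal becomes $\tfrac12\abs{g}=2\cos\phi\cos\psi\,[1-\cos(\phi+\psi)]\leqslant1$; writing $2\cos\phi\cos\psi=\cos(\phi-\psi)+\cos(\phi+\psi)$ and maximizing first over $\phi-\psi$ (optimum at $\phi=\psi$) and then over $\phi+\psi$ gives the bound $\sin^2(\phi+\psi)\leqslant1$, hence $g\geqslant-2$ and $x\geqslant-\tfrac14$. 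The minimum is located at $\phi=\psi=\tfrac\pi4$, i.e. $\bsr_1\perp\bsr_2$ and $\bsr_3\perp\bsr_4$, and the explicit configuration $\bsr_1=\bse_1,\bsr_2=\bse_2,\bsr_3=-\bse_1,\bsr_4=-\bse_2$ attains $x=-\tfrac14$. I expect the main obstacle to be organizing this multivariable minimization cleanly: a direct Lagrange attack over four points on $(S^2)^4$ is unwieldy, so the crux is recognizing that Cauchy--Schwarz decouples the expression into a two-variable function and that the half-angle substitution renders the residual optimization trigonometrically transparent. As a conceptual check one can verify that the two Cauchy--Schwarz steps are simultaneously saturated by a genuine configuration (coplanar vectors with antiparallel sums and aligned cross products), confirming that the reduction is lossless.
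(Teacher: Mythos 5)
Your proposal is correct, and while Part (2) coincides with the paper's proof (both reduce $\max\abs{y}$ to Lemma~\ref{lem:intineq} via $y=\tfrac18\det(\bsr_1+\bsr_2,\bsr_2+\bsr_3,\bsr_3+\bsr_4)$), your treatment of the lower bound $x\geqslant-\tfrac14$ in Part (1) is a genuinely different argument. The paper stays with the pairwise-overlap form $8x=(1+t_1)(1+t_6)+(1+t_3)(1+t_4)-(1-t_2)(1-t_5)$ and argues that the minimum forces the subtracted factor to its maximum $4$, i.e.\ $t_2=t_5=-1$, hence $\bsr_1+\bsr_3=0=\bsr_2+\bsr_4$, after which $8x=2\Inner{\bsr_1}{\bsr_2}^2-2\geqslant-2$; as written this is a restriction to a candidate extremal configuration rather than a global estimate, since the six overlaps $t_k$ are coupled through the four unit vectors and it is not justified a priori that the minimizer must satisfy $(1-t_2)(1-t_5)=4$. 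Your route avoids this: the Binet--Cauchy rewrite $8x=(1+a_{12})(1+a_{34})+(\bsr_1+\bsr_2)\cdot(\bsr_3+\bsr_4)-(\bsr_1\times\bsr_2)\cdot(\bsr_3\times\bsr_4)$ (which I checked agrees with Eq.~\eqref{eq:A2}), followed by two Cauchy--Schwarz estimates, \emph{decouples} the problem into the two-variable bound $g(u,v)\geqslant-2$, and the half-angle substitution makes that bound an exact trigonometric identity, $g=4\cos\phi\cos\psi\br{\cos(\phi+\psi)-1}$ with $\abs{g}\leqslant 2\sin^2(\phi+\psi)\cdot 1\leqslant2$ after optimizing $\cos(\phi-\psi)$. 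This yields a genuinely global proof of $8x\geqslant-2$, pinpoints the equality case $\phi=\psi=\tfrac\pi4$ (so $\bsr_1\perp\bsr_2$, $\bsr_3\perp\bsr_4$), and your verification that both Cauchy--Schwarz steps are simultaneously saturated by $\bsr_1=\bse_1,\bsr_2=\bse_2,\bsr_3=-\bse_1,\bsr_4=-\bse_2$ confirms the reduction is lossless; note this attaining configuration ($\bsr_1=-\bsr_3$, $\bsr_2=-\bsr_4$ orthogonal) is exactly the one the paper lands on. Your additional connectedness remark also makes the ``ranges over the full interval'' claim explicit, which the paper leaves implicit. In short: the paper's argument is shorter but hinges on an unproved localization of the minimizer; yours is slightly longer but closes that gap and is the more rigorous of the two.
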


\begin{proof}
Now for
$t_1=\Inner{\bsr_2}{\bsr_3},t_2=\Inner{\bsr_1}{\bsr_3},t_3=\Inner{\bsr_1}{\bsr_2}$
and
$t_4=\Inner{\bsr_3}{\bsr_4},t_5=\Inner{\bsr_2}{\bsr_4},t_6=\Inner{\bsr_1}{\bsr_4}$,
we see that Eq.~\eqref{eq:A2} becomes
\begin{eqnarray*}
8x&=&\sum^6_{k=1}t_k + t_1t_6 - t_2t_5 + t_3t_4 = (1+t_1)(1+t_6)+(1+t_3)(1+t_4)-(1-t_2)(1-t_5),\\
8y &=& \det(\bsr_1+\bsr_2,\bsr_2+\bsr_3,\bsr_3+\bsr_4).
\end{eqnarray*}
Note that $\abs{t_k}\leqslant1$ for all $k$. Then, all
$0\leqslant1\pm t_k\leqslant2$, and thus
\begin{eqnarray*}
8x=(1+t_1)(1+t_6)+(1+t_3)(1+t_4)-(1-t_2)(1-t_5)\leqslant
2\times2+2\times2-(1-t_2)(1-t_5)\leqslant8.
\end{eqnarray*}
Still, also from the right-hand side expression of $8x$, in order to
attain its minimal value, it must be the case where the non-negative
factor $(1-t_2)(1-t_5)$ attains its maximal value $4$ since
$(1+t_1)(1+t_6)\geqslant0,(1+t_3)(1+t_4)\geqslant0$, and
$(1-t_2)(1-t_5)\leqslant4$. Now, $(1-t_2)(1-t_5)=4$ if and only if
$t_2=t_5=-1$, which is true if and only if
$\Inner{\bsr_1}{\bsr_3}=\Inner{\bsr_2}{\bsr_4}=-1$, or equivalently
$\bsr_1+\bsr_3=0=\bsr_2+\bsr_4$. Now, under this restricted
conditions $\bsr_1+\bsr_3=0=\bsr_2+\bsr_4$, we consider the optimal
problem
\begin{eqnarray*}
8x&=&(1+t_1)(1+t_6)+(1+t_3)(1+t_4)-(1-t_2)(1-t_5)\\
&\geqslant&(1+\Inner{\bsr_2}{\bsr_3})(1+\Inner{\bsr_1}{\bsr_4})+(1+\Inner{\bsr_1}{\bsr_2})(1+\Inner{\bsr_3}{\bsr_4})-4\\
&=&(1-\Inner{\bsr_1}{\bsr_2})^2+(1+\Inner{\bsr_1}{\bsr_2})^2-4 =
2\Inner{\bsr_1}{\bsr_2}^2-2\geqslant-2.
\end{eqnarray*}
Therefore, $8x\in[-2,8]$, $8x$ attains its maximal value $8$ when
$\bsr_1=\bsr_2=\bsr_3=\bsr_4$, and $8x$ attains its minimal value
$-2$ when $\bsr_1+\bsr_3=0=\bsr_2+\bsr_4$ and $\bsr_1\perp\bsr_4$.
Now, we can identity $\max\abs{y}
=\frac18\max_{\bsr_k:k=1,2,3,4}\det(\bsr_1+\bsr_2,\bsr_2+\bsr_3,\bsr_3+\bsr_4)$
by \eqref{eq:A2}. Using Lemma~\ref{lem:intineq}, we get immediately
that
\begin{eqnarray*}
\max\abs{y}
=\frac18\max_{\bsr_k:k=1,2,3,4}\det(\bsr_1+\bsr_2,\bsr_2+\bsr_3,\bsr_3+\bsr_4)
= \frac18 \frac{16}{3\sqrt{3}}=\frac2{3\sqrt{3}}=\tau_4.
\end{eqnarray*}
This completes the proof.
\end{proof}

\end{widetext}
\end{document}